\documentclass[sigconf]{acmart}

\AtBeginDocument{%
  \providecommand\BibTeX{{%
    Bib\TeX}}}

\copyrightyear{2025}
\acmYear{2025}
\setcopyright{rightsretained}
\acmConference[ICDCN 2025]{26th International Conference on Distributed Computing and Networking}{January 04--07, 2025}{Hyderabad, India}
\acmBooktitle{26th International Conference on Distributed Computing and Networking (ICDCN 2025), January 04--07, 2025, Hyderabad, India}
\acmPrice{}
\acmDOI{10.1145/3700838.3700851}
\acmISBN{979-8-4007-1062-9/25/01}

\usepackage[ruled,vlined,linesnumbered]{algorithm2e}
\let\oldnl\nl% Store \nl in \oldnl
\newcommand{\nonl}{\renewcommand{\nl}{\let\nl\oldnl}}
\def\BibTeX{{\rm B\kern-.05em{\sc i\kern-.025em b}\kern-.08em
    T\kern-.1667em\lower.7ex\hbox{E}\kern-.125emX}}

\usepackage[font=small,skip=3pt]{caption}
\usepackage{subcaption}
\usepackage{booktabs} 
\usepackage{bbm}
\usepackage{amsthm} 
\usepackage{amsmath} 
\newtheorem{theorem}{Theorem}

\usepackage[linesnumbered,ruled,vlined]{algorithm2e}
\SetKwInput{KwData}{Input}
\SetKwInput{KwResult}{Output}
\usepackage{algpseudocode}
\usepackage{mathtools}

\newtheorem{definition}{Definition}
\newtheorem{condition}{Condition}[definition]

\usepackage{enumitem}
\usepackage{setspace}
\DontPrintSemicolon

\newcommand{\ph}{\textbf{PreferenceHelper} \xspace}
\newcommand{\pc}{\textbf{PreferredCoalition}\xspace}
\newcommand{\pcdp}{\textbf{PCD}\xspace}
\newcommand{\pcgreedy}{\textbf{PCG}\xspace}

\newcommand{\our}{\textit{SMEVCA}}

\begin{document}
\title{SMEVCA: Stable Matching-based EV Charging Assignment in Subscription-Based Models}

\author{Arindam Khanda}
\email{akkcm@mst.edu}
\affiliation{
  \institution{Missouri University of Science and Technology, USA}
  \country{}
}
\author{Anurag Satpathy}
\email{anurag.satpathy@mst.edu}
\affiliation{%
  \institution{Missouri University of Science and Technology, USA}
  \country{}
}
\author{Anusha Vangala}
\email{anusha.vangala@mst.edu}
\affiliation{%
  \institution{Missouri University of Science and Technology, USA}
  \country{}
}
\author{Sajal K. Das}
\email{sdas@mst.edu}
\affiliation{%
  \institution{Missouri University of Science and Technology, USA}
  \country{}
}
\begin{abstract}
The rapid shift from internal combustion engine vehicles to battery-powered electric vehicles (EVs) presents considerable challenges, such as 
%due to the lack of sufficient charging infrastructure. Key issues include 
limited charging points (CPs), unpredictable wait times for charging, and difficulty in selecting appropriate CPs for EVs. To address these challenges, we propose a novel end-to-end framework, called \textbf{S}table \textbf{M}atching based \textbf{EV} \textbf{C}harging \textbf{A}ssignment (SMEVCA) that efficiently assigns charge-seeking EVs to CPs with the assistance of roadside units (RSUs). The proposed framework operates within a subscription-based model, ensuring that the subscribed EVs complete their charging within a predefined time limit enforced by a service level agreement (SLA). The framework SMEVCA employs a stable, fast, and efficient EV-CP assignment formulated as a one-to-many matching game with preferences. The matching process identifies the preferred coalition (a subset of EVs assigned to the CPs) using two strategies: (1) {\em Preferred Coalition Greedy} (PCG) that offers an efficient, locally optimal heuristic solution; and (2) {\em Preferred Coalition Dynamic} (PCD) that is more computation-intensive but delivers a globally optimal coalition. Extensive simulations reveal that PCG and PCD achieve a gain of 14.6\% and 20.8\% over random elimination for in-network charge transferred with only 3\% and 0.1\% EVs unserved within the RSUs vicinity.
\end{abstract}
\begin{CCSXML}
<ccs2012>
   <concept>
       <concept_id>10003752.10010070.10010099</concept_id>
       <concept_desc>Theory of computation~Algorithmic game theory and mechanism design</concept_desc>
       <concept_significance>500</concept_significance>
       </concept>
   <concept>
       <concept_id>10010405.10010481.10010485</concept_id>
       <concept_desc>Applied computing~Transportation</concept_desc>
       <concept_significance>500</concept_significance>
       </concept>
 </ccs2012>
\end{CCSXML}

\ccsdesc[500]{Theory of computation~Algorithmic game theory and mechanism design}
\ccsdesc[500]{Applied computing~Transportation}

\keywords{Subscription Model, EVs, Charge Point Assignment, Matching Theory, Greedy, Dynamic Programming}

\maketitle

% \vspace{-0.1in}
\section{Introduction}\label{sec:Intro}
In recent years, the adoption of Electric Vehicles (EVs) has surged, driven by their use of onboard batteries that store electrical energy to power the vehicle's engine. Unlike traditional internal combustion engine (ICE) vehicles, EVs provide a cleaner energy alternative, reduce dependence on fossil fuels, and, most importantly, produce zero greenhouse gas emissions, paving the way for a more sustainable future \cite{yong2015review}. Many countries have made significant progress in this transition, viewing EVs as the future of transportation. For example, Canada and the UK have set policies to completely phase out ICE vehicles by $2040$, while China halted ICE production investments in $2019$ to promote EV sales. Additionally, about $10$ states in the U.S. have mandated the use of zero-emission vehicles, with California aiming for $5$ million EVs on the road by $2030$ \cite{shurrab2022stable}. Further, it is expected that by $2030$, the number of EVs on-road worldwide will exceed $250$ million \cite{shurrab2022stable}. 
% \textcolor{red}{is it in the US or worldwide?}

%vspace{-0.05in}
%\subsection{Motivation}
With the anticipated rise in EVs on the road, the demand for charging these vehicles will significantly challenge the current infrastructure. Battery swapping — replacing discharged batteries with fully charged ones — was initially proposed to reduce the charging times. However, this approach required heavy forklift equipment for battery replacement, leading to wear and tear on the vehicle's battery compartment \cite{EV-1, EV-2}. Moreover, proprietary rights and ownership issues limited its commercial viability. As a result, plug-in EVs, which recharge by connecting to charging points (CPs), have gained popularity; the market share for fully electric and plug-in hybrid vehicles increased from 4\% in $2020$ to 18\% in $2023$ \cite{HEV, EVAdopt}. Despite this growth, the $2019$ Global EV Outlook and the Electric Vehicle Initiative (EVI) \cite{shurrab2022stable} reported that in $2018$, there were approximately $5.1$ million EVs on the road worldwide, but only $539,000$ public charging stations (CSs) \footnote{Note that a charging station can have multiple charging points}, 
with China accounting for $50$\% of these stations. In the U.S., a $2024$ report from the Department of Energy indicates there are currently $65,733$ public CSs with $179,173$ CPs and $3,641$ private CSs with $14,350$ CPs \cite{afdc_ev_charger_2018}. These statistics reveal a significant disparity between the growth rates of EVs and CSs, with EVs growing at a rate of $62.4$\% while CSs lag at $22.7$\%.
To put this in perspective, the EVI recommends a sustainable EV-to-CS ratio of $10:1$. However, in places like California, this ratio is far from being achieved, standing at $25:1$. Moreover, as more EVs hit the road, such ratio has worsened by $25$\% in recent years, indicating that the charging infrastructure is struggling to keep pace with the rising number of EVs. A major reason is higher installation costs in the CSs, consequently leading to sparse availability and immature deployments \cite{wang2016electric}. This motivates our work in this paper.
% https://www.energy.gov/articles/history-electric-car
% https://en.wikipedia.org/wiki/History_of_the_electric_vehicle#2020s 
%https://www.mckinsey.com/industries/automotive-and-assembly/our-insights/the-global-electric-vehicle-market-is-amped-up-and-on-the-rise
%https://www.iea.org/energy-system/transport/electric-vehicles
%https://www.iea.org/energy-system/transport/electric-vehicles
% https://afdc.energy.gov/stations#/analyze?fuel=ELEC&country=US
%https://energycenter.org/thought-leadership/blog/state-electric-vehicle-adoption-us-and-role-incentives-market
%https://afdc.energy.gov/data
%https://www.edmunds.com/electric-car/articles/percentage-of-electric-cars-in-us.html
%https://www.marketwatch.com/guides/insurance-services/electric-vehicle-statistics-2024/
% In comparison, there are more than 145,000 retail gas fueling stations across the U.S. \cite{doe-1}. With the development of various EV models and associated charging infrastructure, EVs are bound to become the norm for passenger travel in the coming future.  The data shows that establishing an increasing number of more accessible charging stations is essential for the widespread adoption of EVs and for reducing the cost involved in the maintenance of the charging infrastructure. However, the state-of-the-art system does not have any defined standards to which to subscribe and plan the electric charging of vehicles.

\vspace{-0.05in}
\subsection{Research Challenges}
The inadequate charging infrastructure presents significant research challenges as discussed below.

\vspace{2pt}
\noindent 1) {\em Range anxiety} is a critical issue that requires urgent attention \cite{liu2023reciprocal}. It describes the fear that EV owners experience, worrying that their vehicles may run out of charge before reaching the destination. This anxiety is particularly pronounced on highways and rural areas, where CSs are scarce. Running out of battery in such locations leaves the EV owners at risk of being stranded, potentially requiring a tow truck. The issue is further exacerbated as EV adoption expands into mixed utility vehicles until a sustainable highway and rural charging infrastructure is established. Range anxiety also affects urban areas, where crowded CSs result in longer wait times \cite{xiong2017optimal}.

\vspace{2pt}
\noindent 2) Unlike the ICE vehicles, where refueling time is predictable, limited availability of CSs and unpredictable nature of charging needs can lead to unbounded wait times, causing user dissatisfaction. 

\vspace{2pt}
\noindent 3) The charging ecosystem involves EV vendors installing CSs at rented sites, such as \texttt{Walmart}, \texttt{Costco}, and \texttt{Target}, or at standalone CSs. This setup incurs one-time capital expenditures and ongoing operational expenditures, with electricity costs being the primary contributor to the latter \cite{afdc_ev_charger_2018}. Therefore, making appropriate charging decisions is crucial to ensuring the economic sustainability of the charging infrastructure for EV vendors.

\vspace{2pt}
\noindent 4) Given the stochastic nature of wait times and charging preferences (e.g., minimal detour, fast charging options, and competitive pricing), finding an appropriate CP for an EV is a complex decision to be made in real-time, presenting a significant challenge \cite{shurrab2022stable}.

\vspace{-0.1in}
\subsection{Our Contributions}
To address the above challenges, we propose a novel framework called Stable Matching based EV Charging Assignment ({\our}). The major contributions are as follows.
% \textcolor{red}{Do you really need phrases like "Contribution 1:", "Contribution 2:"? Also, [Sec 5] etc. in the following?}

\vspace{-0.05in}
\begin{itemize}
    \item We propose an \textit{end-to-end CP assignment framework}, wherein charge-seeking EVs submit their charging requests to a nearby roadside unit (RSU) and are promptly assigned to CPs.
    \item %As the wait time for EV charging is not bounded, 
    We propose a novel {\em service level agreement (SLA)-driven subscription model} that guarantees EV recharging within a predefined time based on the subscribed plan.
    \item %\textbf{Sophisticated Preference Generation and Assignment Procedure}: 
    We model the assignment of EVs to CPs as a one-to-many matching game and devise a stable, fast, and effective solution. The matching procedure is based on selecting the preferred coalition at the CPs. Specifically, we propose two  preferred coalition selection strategies, called PCG and PCD, based on the greedy and dynamic programming algorithms. The former is an efficient, locally optimal heuristic, while the latter is computation-intensive but globally optimal. 
    \item %\textbf{Simulations and Analysis}: 
    To evaluate the proposed {\our} framework, we compare the performance of the PCD and PCG algorithms with random coalition formation. Extensive simulations confirm the performance improvements of the proposed scheme over in-network charge transfer, SLA misses, and execution times. 
    % \textcolor{red}{What does "concrete" simulation mean! Don't use non-technical or unnecessary words!}
\end{itemize}

% \subsection{Layout of Paper}
The remainder of the paper is organized as follows. Section \ref{sec:RelWork} summarizes the related literature on EV assignment to CPs or CSs.
Section \ref{sec:system_model} outlines the system model and assumptions while Section \ref{sec:problem_formulation} formulates the EV-CP assignment problem. The solution approach is detailed in Section \ref{sec:solution_approach}, followed by the simulation results in Section \ref{sec:results}. Section \ref{sec:discussion_and_limitations} discusses the features and limitations of {\our}, and conclusions are drawn in Section \ref{sec:cnls}.

\vspace{-0.1in}
\section{Related Work}\label{sec:RelWork}

Research on EV charging encompasses several critical challenges, including (1) infrastructure establishment and management, (2) scheduling of EVs to CPs, and (3) pricing and subscription models for payment management. We focus on developing an effective scheduling algorithm for assigning EVs to CPs, and we primarily review the literature with this objective in mind.

The allocation of EVs to CSs involves optimizing charge usage while ensuring timely and cost-effective charging with minimal resource expenditure. Algafri \textit{et al.} \cite{ALGAFRI2024} addressed this by combining the Analytic Hierarchy Process (AHP) with goal programming to minimize costs, reduce battery degradation, and lower time overheads for EV owners. However, their approach does not account for the complexities of extended trips requiring visits to multiple CSs, which Sassi and Oulamara address \cite{rw-2017-1}. They proposed scheduling EVs to visit multiple CSs to maximize travel distance while minimizing costs, but this model overlooks real-time constraints related to CS availability.

Similarly, in car-sharing services requiring long-distance travel, $0-1$ mixed integer linear programming (0-1 MILP) can assign cars to CSs based on reservation time slots, as suggested in \cite{rw-2017-2}. However, this approach is constrained to scenarios with predetermined schedules and does not accommodate unpredictable demand or on-demand car-sharing allocations. Conversely, the EV assignment method in \cite{rw-2024} uses crowd detection and generalized bender decomposition (GBD) for optimized assignments, considering M/M/D queueing at fast CSs and a distributed biased min consensus algorithm for navigation. While this method offers advancements, it may introduce implementation complexities and high computational overhead. On the other hand, Elghitani and El-Saadany \cite{rw-2021-El} proposed a strategy to reduce the total service time across the EVs using Lyapunov optimization, dynamically reassigning EVs based on travel delays. However, this approach might lead to user dissatisfaction due to frequent reassignments, potentially disrupting travel plans and increasing anxiety. 

Though the existing models primarily focus on reducing charging costs for EV owners, they frequently neglect the financial implications for EV vendors. The model presented in \cite{rw-2019} is one of the first to address vendor profit maximization by assigning EVs to CSs based on available solar energy, aiming to serve the maximum number of vehicles while minimizing operational costs (OPEX). However, this approach introduces a drawback by charging users post-admission based on their remaining valuation, which could result in unfair pricing and diminished user satisfaction.   
% \vspace{-0.15in}

\noindent \textbf{Research Gaps: }
The reviewed literature indicates that most studies on CP assignments \cite{rw-2024, rw-2021-El} are computationally intensive and time-consuming. In contrast, \cite{rw-2017-1} overlooks real-time constraints related to CS availability, waiting time limits, and charging type. Additionally, many studies neglect the financial impact on EV vendors, which is crucial for the sustainability of the charging infrastructure.

{\our} is designed to tackle these challenges by implementing a subscription-based SLA regulating waiting time limits to ensure appropriate CP assignments. A scalable, time-efficient, and matching-based procedure maximizes EV assignments in-network, effectively reducing economic overhead. 

\vspace{-0.05in}
\vspace{-0.05in}
\section{System Model and Assumptions}\label{sec:system_model}
A Roadside Unit (RSU) governs a circular region encompassing multiple CSs, as illustrated in Fig. \ref{fig:system_model}. Subscribed EVs enter the RSU’s coverage area within a given time frame, each requiring a charge. These EVs are assumed to be part of a vendor-specific plan tied to the established charging infrastructure in the region. The vendor usually offers two charging options, represented by a variable $\theta \in \{0,1\}$: (1) Level 3, or fast charging, denoted as $\theta = 1$, and (2) Level 2, or regular charging, denoted as $\theta = 0$. Fast charging (Level 3) is faster but comes at a higher cost compared to regular charging (Level 2) \cite{afdc_ev_charger_2018}. The vendor-based subscription framework aims to provide a seamless charging experience for EV owners by bounding the waiting time and eliminating the need for payment at each recharge. In this work, we do not delve into the pricing specifics of the subscription plan; instead, we focus on efficiently assigning CPs to EVs based on their subscribed plans. 

\begin{figure}[htb]
    \centering
    \includegraphics[width=0.99\columnwidth]{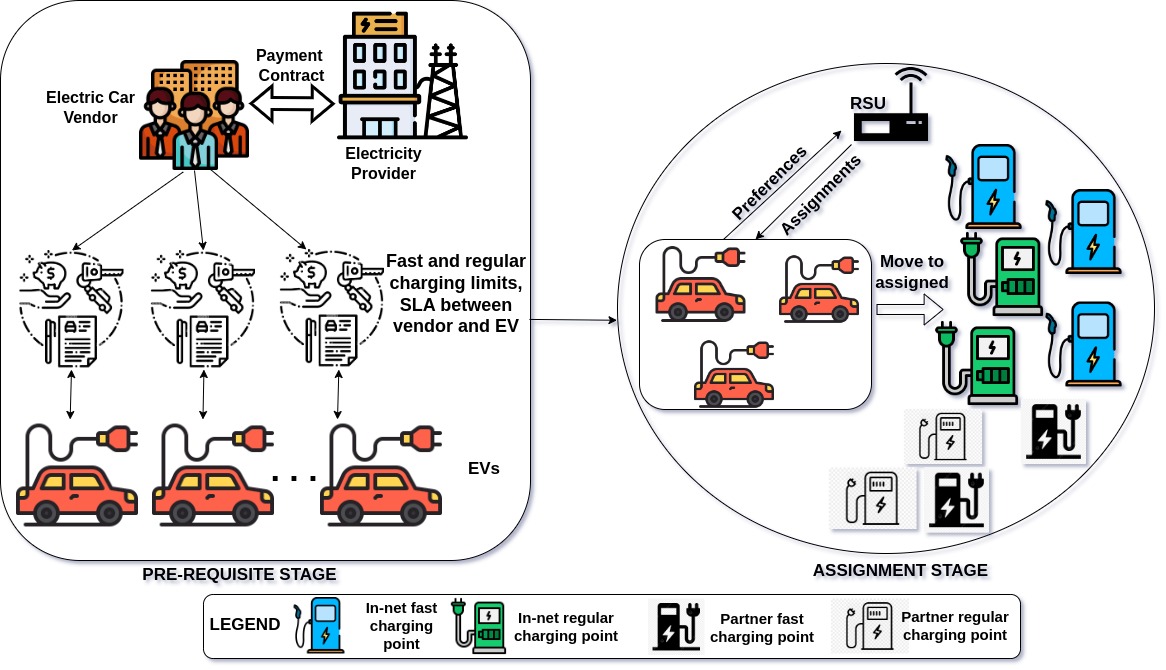}
    \caption{The Overall Architecture of {\our}.}
    \label{fig:system_model}
\vspace{-0.1in}
\end{figure}
\vspace{-0.15in}
\subsection{Subscription Model}\label{sec:subscription_model}
A subscription is a contractual agreement wherein an EV owner (subscriber) pays a recurring fee to receive charging services from a vendor over a specified subscription period. For a set of subscribed EVs represented as $\mathcal{S} = \{s_i\,|\,1 \leq i \leq |\mathcal{S}|\}$, the subscription plan has the following features.
\begin{itemize}[leftmargin=*]
    \item \textbf{Fast Charging Limit}: Each subscribed EV $s_i \in \mathcal{S}$ has an available fast charging capacity $f_i$ (in \textit{kWh}) for the subscription period. This limit applies within the vendor's network and extends to partner networks (see Section~\ref{subsec:charging_infrastructure}).
    \item \textbf{Regular Charging Limit}: A subscriber $s_i$ is entitled to unlimited regular charging throughout the subscription period.
    \item \textbf{Service Level Agreement (SLA)}:
    The agreement bounds the maximum wait time per charging session for subscriber $s_i$ at any assigned CP, irrespective of its association. This maximum time, denoted as $\gamma_i$, includes the wait time before the actual charging begins due to any existing queue at the assigned CP.
\end{itemize}
\vspace{-0.1in}
\subsection{Charging Infrastructure and Cost Model}
\label{subsec:charging_infrastructure}
Offering charging services involves two primary types of expenditures from the vendor's perspective: (1) Capital Expenditure (CAPEX), which includes the costs of installing charging points and acquiring or leasing charging sites, and (2) Operational Expenditure (OPEX), covering electricity costs, network services, bank transaction fees, and more \cite{afdc_ev_charger_2018}. In the prevalent pre-pay model, EV owners pay for each recharge session individually, with prices varying based on location, time of day, and anticipated traffic at the charging points. In contrast, the subscription-based model explored in this work allows an EV owner to make a one-time payment to the car vendor (e.g., Tesla, Rivian) for a charging plan that covers a specific subscription period (typically days). 
 
The RSU assigns EVs to vendor-owned CPs on a need-to-need basis. 
Furthermore, vendors can establish partnerships with other vendors to extend service coverage. From the vendor's perspective, charging infrastructure is classified into two types:
\begin{itemize}[leftmargin=*]
    \item \textbf{In-Network CPs}: In {\our}, CPs (installed and maintained) by a vendor at owned or rented sites such as \texttt{Target}, \texttt{Costco}, and \texttt{Walmart} are considered as in-network CPs, and the set is denoted as $\mathcal{C}^{\theta, in}$. 
    
    \item \textbf{Partner Network CPs}: 
    During peak loads, a vendor may allow EVs to charge at CPs from other vendors, referred to as partner network CPs. 
    This network of partner CPs is denoted as $\mathcal{C}^{\theta, par}$ helps extend service coverage.
   
\end{itemize}

Additionally, we assume that within the RSU coverage, a vendor has four options for charging an EV $s_i$: (1) in-network fast, (2) in-network regular, (3) partner-network fast, and (4) partner-network regular. While fast charging is more expensive than regular charging, partner networks generally impose facilitation fees, making fast and regular charging more costly than in-network options. Therefore, it is reasonable to conclude that, for a vendor to boost its long-term revenue, it must service most EVs in-network.
\vspace{-0.05in}
\vspace{-0.05in}
\subsection{Assumptions}
{\our} operates under the following assumptions: All EVs have identical battery capacities, denoted as \(B_{full}\), and the vendor’s application/software on the EV knows details about its designated path and charge requirements. When charging is needed, the EV requests a nearby RSU. The RSU’s coverage radius is assumed to be $1.5$ \textit{miles}, with routes confined to this area \cite{Siemens2024}. 
This framework focuses exclusively on public CSs owned by vendors or installed at rented sites, excluding home or vehicle-to-vehicle (V2V) charging \cite{yassine2023match}. We assume a subscribed EV with an available fast charging quota prefers fast charging over regular charging. Charging requests are collected and processed in batches at the RSU, simplifying the model by not accounting for EVs dynamically entering or leaving the network during the processing period. Future work will address this dynamic aspect. The RSU manages charging requests from vendor applications and assigns EVs to suitable CPs, with subscribers adhering to the assignment recommendations.
\vspace{-0.1in}
\section{Problem Formulation}\label{sec:problem_formulation} 
An EV requiring charge requests information on available CPs from the nearby RSU, which responds with the precise locations of CPs. The EV then generates preferences based on the detour distance from its current route. Once the preferences are generated, the assignment algorithm, executing at the RSU, performs allocation considering factors such as (1) requested charge, (2) CP availability, (3) charging limit specified in the subscription, and (4) maximum wait time as per the SLA. Note that the RSU's assignment follows the vendor's interests and targets: (1) minimizing the charging cost of EVs and (2) SLA adherence. This can be achieved by serving maximum charge-seeking EVs in-network, as it is a cheaper option. 
While this approach is cost-effective,
assignments in-network could elevate the waiting times, potentially violating SLAs.

\vspace{-0.1in}
\subsection{SLA-Compliant EV-CP Assignment} \label{CPassign}
We consider a distributed model for preference generation wherein the subscribed EVs independently compute their preferences for CPs. 
Let us consider a subscribed EV $s_i$, with a residual battery capacity of $\mathcal{B}_i^0$ (\textit{kWh}), average velocity $v_i$ (\textit{miles/second}), and average mileage of $m_i$ (\textit{miles/ kWh}) places a charging request with the RSU. The estimated time duration ($\mathcal{T}^0_{i, j}$) and charge expended ($\mathcal{B}_{i, j}$) in traversing to a CP $c_j^{\theta, \eta} \in \mathcal{C}$ located at distance $d_{i, j}$ (\textit{miles}) can be derived as per Eqs. (\ref{eqn:eqn_1}) and (\ref{eqn:eqn_2}), where $c^{\theta, \eta}_{j} \in \mathcal{C}$ is the CP under consideration. Note that $\theta \in \{0, 1\}$ represents the type of charging, either fast or regular, while $\eta \in \{in, par\}$ denotes the subscribed network, either in-network or partner network.

\begin{minipage}{.4\linewidth}
\begin{equation} \label{eqn:eqn_1}
    \mathcal{T}^0_{i, j} = \frac{d_{i,j}}{v_i} [seconds]
\end{equation}
\end{minipage}%
\begin{minipage}{.4\linewidth}
\begin{equation}\label{eqn:eqn_2}
    \mathcal{B}_{i, j} = \frac{d_{i, j}} {m_i} [kWh]
\end{equation}
\end{minipage}

Assuming the charging rate at $c^{\theta, \eta}_{j}$ be $r_j$ (kWh/seconds), and EV $s_i$'s reception rate as $\hat{r_i}$ (kWh/seconds), the effective charge transfer rate is bounded by $\min(r_j, \hat{r_i})$.
Let $s_i$ desires to reach $\alpha_i (0 < \alpha_i \leq 1)$ fraction of the full charge $\mathcal{B}_{full}$, hence the actual charge ($\psi_{i, j}$) needed for $s_i$ at CP $c^{\theta, \eta}_{j}$ 

is derived as per Eq. (\ref{eqn:eqn_3}), where $\alpha_i$ is assumed to be predetermined by the EV owner.
\begin{equation}\label{eqn:eqn_3}
\vspace{-0.1in}
    \psi_{i, j} = \alpha_i \cdot \mathcal{B}_{full} - (\mathcal{B}_i^0 - \mathcal{B}_{i,j}) [kWh]
\end{equation}

Next, the amount of time expended ($\mathcal{T}_{i, j}$) by $s_i$ at $c^{\theta, \eta}_{j}$ to receive a recharge of $\psi_{i, j}$ is computed as Eq. (\ref{eqn:eq_4}).
\begin{equation} \label{eqn:eq_4}
%\vspace{-0.15in}
    \mathcal{T}_{i, j} = \frac{\psi_{i, j}}{\min(r_j, \hat{r_i})} [seconds]
\end{equation}

A CP $c^{\theta, \eta}_{j}$ may not always be free for immediate charging. 
Assuming it becomes available after a duration of $\mathcal{T}^{free}_j$ from the current instance.
However, we consider that the SLA $\gamma_i$ is triggered once $s_i$ arrives at $c^{\theta, \eta}_{j}$ and encompasses the waiting time before the start of charging. If $s_i$ is to be charged with $\psi_{i, j}$, 
then the \textit{effective duration to SLA-compliant charging} ($\delta_{i, j}$) is derived as Eq. (\ref{eqn:eqn_5}).
\begin{align}\label{eqn:eqn_5}
\vspace{-0.1in}
\delta_{i, j}  & = \mathcal{T}_{i,j} + \gamma_i &\text{ if } \mathcal{T}^{free}_j = 0 \notag\\
&= \mathcal{T}_{i, j} + \gamma_i - (\mathcal{T}^{free}_j - \mathcal{T}^0_{i,j}) &\text{ if }  0 < \mathcal{T}^{free}_j \leq \mathcal{T}^0_{i, j} + \gamma_i \notag\\
& =  0 &\text{ Otherwise}
\end{align}

Additionally, we consider $\vartheta_{i, j} \in \{0,1\}$ to be an indicator variable that equals $1$ if subscriber $s_i \in \mathcal{S}$ is assigned to CP $c^{\theta, \eta}_{j} \in \mathcal{C}$, $0$ otherwise. 
Assuming $\tau_{c}$ is the current time instance, let $s_i$ complete its charging at $c^{\theta, \eta}_{j}$ at time $\tau_{i, j}$. The objective of {\our} can then be formally defined as follows\footnote{In the paper, time instants are captured by $\tau$, and time duration's (intervals) are denoted by $\mathcal{T}$.}:
\begin{align}
\vspace{-0.1in}
& \max \sum_{s_i \in \mathcal{S}} \sum_{c^{\theta, \eta}_{j}\, \in\, C^{\theta,in}}  \vartheta_{i, j} \cdot \psi_{i, j} \label{eqn:obj_function} 
\end{align}
such that
\begin{align}
 &\quad{} \sum_{c^{\theta, \eta}_{j} \,\in \, \mathcal{C}} \vartheta_{i,j} \leq 1  \quad \forall s_i \in \mathcal{S}\label{constraint_1} \\
 & \quad{} \sum_{s_i \in \mathcal{S}} \vartheta_{i, j} \leq q^{\theta, \eta}_{j} \quad \forall c^{\theta, \eta}_{j} \in \mathcal{C} \label{constraint_2} \\
 & \quad{} 
 % \sum_{c^{\theta, \eta}_{j} \in C^{\theta,in}} 
 \vartheta_{i, j} (\tau_{i,j} - (\tau_{c} + \mathcal{T}^0_{i, j})) \leq \delta_{i, j} \quad \forall s_i \in \mathcal{S}, \,\, \forall c^{\theta, \eta}_{j} \in \mathcal{C}\label{constraint_3}
\end{align}
Eq. (\ref{eqn:obj_function}) captures the overall objective of {\our} and is dedicated to maximizing the in-network charging. Eq. (\ref{constraint_1}) captures the assignment constraint, which implies that an EV can be assigned to at most one CP. On the other hand, Eq. (\ref{constraint_2}) bounds the number of EVs assigned at a CP $c^{\theta, \eta}_{j} \in \mathcal{C}$, wherein $q^{\theta, \eta}_{j}$ reflects its maximum capacity\footnote{The term Capacity/Quota/Queue Size are used interchangeably in this paper.}. 
Any EV-CP assignment within the in-network or partner network must be SLA-compliant, as captured in Eq. (\ref{constraint_3}).

\begin{theorem}
SLA-compliant EV-CP assignment is $\mathcal{NP}$-Hard.
\vspace{-0.05in}
\end{theorem}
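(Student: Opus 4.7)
The plan is to establish $\mathcal{NP}$-Hardness by polynomial-time reduction from the 0-1 Knapsack problem, a classical $\mathcal{NP}$-complete problem. First I would state the decision version of the SLA-compliant EV-CP assignment: given an instance and a target value $K$, does there exist a feasible assignment $\{\vartheta_{i,j}\}$ satisfying constraints~(\ref{constraint_1})--(\ref{constraint_3}) with in-network charge transfer $\sum_{s_i \in \mathcal{S}} \sum_{c^{\theta,\eta}_j \in \mathcal{C}^{\theta,in}} \vartheta_{i,j}\,\psi_{i,j} \geq K$? Membership in $\mathcal{NP}$ is immediate, since an assignment has polynomial size and every constraint can be verified in polynomial time.

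For the hardness reduction, given a knapsack instance with items $1,\dots,n$, integer weights $w_i$, values $v_i$, and capacity $W$, I would construct an EV-CP instance with $n$ EVs, a single in-network CP $c$ of queue size $q_c = n$, and no partner-network CPs. Using equations~(\ref{eqn:eqn_1})--(\ref{eqn:eq_4}), the physical parameters ($v_i$, $m_i$, $\mathcal{B}_i^0$, $\alpha_i$, $\hat{r_i}$, $r_c$, $d_{i,c}$) would be instantiated so that $\psi_{i,c} = v_i$ and $\mathcal{T}_{i,c} = w_i$, with travel durations $\mathcal{T}^0_{i,c}$ set to a common negligible value so that all EVs arrive synchronously at $c$. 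Initializing $\mathcal{T}^{free}_c = 0$ and choosing $\gamma_i = W - w_i$ per EV, the per-EV SLA bound in~(\ref{eqn:eqn_5}) together with constraint~(\ref{constraint_3}) would force, for any subset $S$ of EVs assigned to $c$ in any serving order $\sigma$, the condition $\sum_{l \leq k} w_{\sigma(l)} \leq W$ to hold for every $k$, which is equivalent to $\sum_{i \in S} w_i \leq W$. The resulting in-network charge transfer is exactly $\sum_{i \in S} v_i$, so YES-instances of the constructed EV-CP instance with threshold $K$ correspond bijectively with YES-instances of the knapsack problem, and the reduction is polynomial in $n$ and $\log W$.

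The main technical obstacle I anticipate is arguing rigorously that my parameter choices are internally consistent with the physics-based definitions in~(\ref{eqn:eqn_1})--(\ref{eqn:eqn_3}) and simultaneously enforce the cumulative capacity semantics. In particular, since $\delta_{i,c}$ in~(\ref{eqn:eqn_5}) is computed from a snapshot value of $\mathcal{T}^{free}_c$ but successive admissions into the same CP batch implicitly grow the effective queue delay experienced by later-served EVs, I would need to reconcile the per-EV SLA check with a cumulative scheduling argument. By insisting on synchronous arrivals, uniform initial availability, and the specific choice $\gamma_i = W - w_i$, I would argue that any feasible queue ordering of $S$ at $c$ necessarily certifies the knapsack constraint $\sum_{i \in S} w_i \leq W$, thereby closing the equivalence. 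If a cleaner argument is needed, I would fall back on a reduction from the Multiple Knapsack Problem using several in-network CPs, which sidesteps the ordering subtlety by distributing items across parallel resources.
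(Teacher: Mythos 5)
Your proposal is correct and rests on the same core idea as the paper's proof: restrict to a single in-network CP whose queue capacity equals the number of EVs, encode each knapsack item as an EV with recharge time $\mathcal{T}_{i,c}=w_i$ and charge requirement $\psi_{i,c}=v_i$, and let the SLA play the role of the knapsack capacity. The differences are in execution, and they favor you. First, you state the reduction in the right direction (knapsack $\le_p$ EV-CP) and note $\mathcal{NP}$ membership of the decision version, whereas the paper says the assignment problem ``can be reduced to'' knapsack, which as written only bounds the special case from above. Second, and more substantively, the paper gives every EV the same wait bound $\gamma_{max}$; under constraint~(\ref{constraint_3}) that yields, for a served set $S$ in order $\sigma$, the condition $\sum_{l<k} w_{\sigma(l)} \le \gamma_{max}$ for each $k$, i.e., the total weight \emph{minus the last-served item's weight} must fit --- a knapsack-like but not identical constraint that the paper glosses over with ``all but the previous EV should complete their charging in $\gamma_{max}$ time.'' Your choice $\gamma_i = W - w_i$ makes $\delta_{i,c} = \mathcal{T}_{i,c} + \gamma_i = W$ uniformly, so every prefix-sum condition collapses exactly to $\sum_{i\in S} w_i \le W$ and the correspondence with knapsack YES-instances is exact; this is precisely the ordering subtlety you flag, and you resolve it where the paper does not. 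One residual item to verify in a full write-up is that the physics-level parameters ($d_{i,c}$, $m_i$, $\hat{r}_i$, $r_c$, $\alpha_i$, $\mathcal{B}_i^0$) can indeed be instantiated to realize arbitrary integer pairs $(v_i, w_i)$ via Eqs.~(\ref{eqn:eqn_3})--(\ref{eqn:eq_4}); this is routine (e.g., unit charging rate and zero detour) but should be said explicitly, since the paper skips it as well.
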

\begin{proof}
The SLA-compliant EV-CP assignment problem can be reduced in polynomial time to the $0/1$ knapsack problem by setting the number of in-network charging points $\mathcal{C}^{\theta, in}_j$ to $1$, the CP's capacity $q^{\theta, in}_j$ to the number of subscribed EVs $|\mathcal{S}|$, and the maximum wait time for each subscribed EV to $\gamma_{max}$. If all EVs request charging simultaneously and the CP is available, the last assigned EV can wait up to $\gamma_{max}$ without breaching its SLA. Therefore, all but the previous EV assigned to the CP should complete their charging in $\gamma_{max}$ time.
In this scenario, the problem becomes a $0/1$ knapsack problem where each EV is an item, the recharge time is the item's weight, the charge requirement is the item's value, and $\gamma_{max}$ is the knapsack's capacity. Since the $0/1$ knapsack problem is proven to be $\mathcal{NP}$-Hard \cite{martello1999dynamic}, this specific case of the SLA-compliant charging problem is also $\mathcal{NP}$-Hard.
The general problem, involving multiple CPs with varying capacities and subscribers with different wait times $\gamma_i$, represents a more generalized and complex version of this case. Given the special case is $\mathcal{NP}$-Hard, we can conclude that the general variant of the problem is at least equally hard.
\end{proof}
\vspace{-0.1in}
\section{Solution Approach}\label{sec:solution_approach}
This section highlights the transformation of the SLA-compliant EV-CP assignment as a one-to-many matching game and introduces some crucial terminologies (Section \ref{sec:matching}). The solution to the matching game is elaborated in Section \ref{sec:working_of_matching_game}.
\vspace{-0.1in}
\subsection{EV to CP Assignment as a Matching Game}\label{sec:matching}
The assignment of CPs to EVs can be modeled as a matching game between sets $\mathcal{S} = \{s_i\,|\,1< i< |\mathcal{S}| \}$ and $\mathcal{C} =\{c^{\theta, \eta}_{j} \,|\,1 < j < |\mathcal{C}| \}$ \cite{assila2018many}. In this one-to-many matching game, each CP $c^{\theta,\eta}_{j}$ has a available quota $q^{\theta, \eta}_{j}$, initialized to its capacity. Although only one EV can be charged at any time, we consider $q^{\theta, \eta}_{j}$ scheduling multiple EVs assigned to $c^{\theta, \eta}_{j}$ to account for the varied charge requirements and non-overlapping charging duration. 
Formally, the matching game can be defined as follows.
\begin{definition} (\textbf{Scheduling Game}):
\label{def:def_matching_game}
Given the sets $\mathcal{S}$ and $\mathcal{C}$, a one-to-many matching $\mu$ is an assignment, such that $\mu: \mathcal{S} \cup \mathcal{C} \rightarrow 2^{\mathcal{S} \, \cup \, \mathcal{C}}$, satisfying the following conditions.
\begin{condition}\label{def1:condition_1}
$\forall \, s_i \in \mathcal{S}$, $|\mu(s_i)| \leq 1$, $\mu(s_i) \subseteq \mathcal{C}$,
\end{condition}
\vspace{-0.1in}
\begin{condition}\label{def1:condition_2}
$\forall \, c^{\theta, \eta}_{j} \in \mathcal{C}$, $|\mu(c^{\theta, \eta}_{j})| \leq q^{\theta, \eta}_{j}$, $\mu(c^{\theta, \eta}_{j}) \subseteq \mathcal{S}$,
\end{condition}
\vspace{-0.1in}
\begin{condition}\label{def1:condition_3}
$s_i \in \mu(c^{\theta, \eta}_{j})$ $\Longleftrightarrow$ $c^{\theta, \eta}_{j} \in \mu(s_i)$.
\end{condition}
\end{definition}

Note that Condition (\ref{def1:condition_1}) states that $s_i$ can be assigned to at most one CP. On the other hand, Condition (\ref{def1:condition_2}) states that $c^{\theta, \eta}_{j}$ can accommodate at most $q^{\theta, \eta}_{j}$ EVs. Finally, Condition (\ref{def1:condition_3}) enforces that $s_j$ is assigned to $c^{\theta, \eta}_{j}$ \textit{iff} they are matched to each other.

Note that in the context of matching theory, each agent $a \in \mathcal{S} \cup \mathcal{C}$ assigns preferences over agents in the opposite set \cite{echenique2004theory}.
These preference relations are \textit{binary}, \textit{anti-symmetric}, and \textit{complete} relations. For example, a subscribed EV $s_i \in \mathcal{S}$, can have a preference in the form of $c^{\theta, \eta}_{j} \succ_{s_i} c^{\theta, \eta}_{j'}$. This implies that $s_i$ prefers assignment with $c^{\theta, \eta}_{j}$ over $c^{\theta, \eta}_{j'}$. Moreover, such a preference relationship where an agent is indifferent between any two agents of the other set is called \textit{strict}. The accumulation of all such preferences of an agent $\{\succ_{s_i}\}_{i \in [1, |\mathcal{S}|]}$ is termed as its preference profile \cite{satpathy2022rematch}.

\begin{definition}(\textbf{Individual Rationality})\label{label:def_2:individual_rationality}
A matching $\mu$ is individually rational, \textit{iff} $\mu(s_i) \succ_{s_i} \emptyset$ and $\mu(c^{\theta, \eta}_{j}) = \Theta_j (\mu(c^{\theta, \eta}_{j}))$, where $\Theta_j(.)$ returns the most preferred coalition, i.e., subset of $\mathcal{S}$ as per the preference of $c^{\theta, \eta}_{j}$.
\end{definition}

The property of individual rationality is essential to maintain, as it ensures that the matching assignment is voluntary and that no agent in $a \in \mathcal{S} \cup \mathcal{C}$ prefers to be assigned to $\mu'(a) \subset \mu(a)$, i.e., $\mu'(a) \succ_{a} \mu(a)$. If such scenarios arise, agent $a$ will not prefer to be assigned to agents in $\mu(a) \setminus \mu'(a)$ \cite{liu2018dats}.

\begin{definition}(\textbf{Pairwise Block})\label{label:def_2:pairwise_block}
A matching $\mu$ is pairwise blocked by an EV-CP pair $(s_i, c^{\theta, \eta}_{j})$ \textit{iff} the following conditions are satisfied:
\begin{condition}\label{def_3:condition_1}
$s_i \notin \mu(c^{\theta, \eta}_{j})$ and $c^{\theta, \eta}_{j} \notin \mu(s_i)$    
\end{condition}
\vspace{-0.1in}
\begin{condition}\label{def_3:condition_2}
$c^{\theta, \eta}_{j} \succ_{s_i} \mu(s_i)$
\end{condition}
\vspace{-0.1in}
\begin{condition}\label{def_3:condition_3}
$s_i \in \Theta_j(\mu(c^{\theta, \eta}_{j})\, \cup \, \{s_i\})$ 
\end{condition}
\end{definition}
Condition (\ref{def_3:condition_1}) states that $s_i$ and $c^{\theta, \eta}_{j}$ are not assigned to each other. Conditions (\ref{def_3:condition_2}) and (\ref{def_3:condition_3}) enforce that $s_i$ and $c^{\theta, \eta}_{j}$ are not satisfied with their current assignments and have incentives to deviate and be matched to each other. 

\begin{definition} (\textbf{Stability}) \label{def:stability}
A matching $\mu$ is said to be stable \textit{iff} it is individually rational and is not pairwise blocked by any two agents. 
\end{definition}

\vspace{-0.1in}
\subsection{Working of the Matching Algorithm}\label{sec:working_of_matching_game}
The overall working of the matching algorithm encompasses two steps: (1) preference generation (refer to section \ref{sec:preference_generation}) and (2) assignment of EVs to CPs (refer to Section \ref{sec:matching_algorithm}). Next, we delve into the details of the preference generation of subscribed EVs \cite{zhao2016many}.
\vspace{-0.1in}
\subsubsection{Preferences of EVs}\label{sec:preference_generation}
\setlength{\textfloatsep}{0.1cm}
\setlength{\floatsep}{0.1cm}
\begin{algorithm}[htb]
%\setstretch{0.85}
    \caption{\textbf{Preference\_Generator\_EV}}
    \label{algo:pref_generator_EV}
    \small
    \DontPrintSemicolon
    %\scriptsize
    \KwIn{$\mathcal{S}$, $\mathcal{C}$}%= \bigcup c^{\theta, \eta}_j, \theta \in \{0, 1\}, \eta \in \{in, par\}$}
    \KwOut{$P(s_i), \forall s_i \in \mathcal{S}$}
    \textbf{Initialize}: $P(s_i) \gets \emptyset, \forall s_i \in \mathcal{S}$ \tcp*{Null Initialization}
    \For{$s_i \in \mathcal{S}$}{
        \tcc{\textbf{In-network:} More Preferred}
        \tcc{In-Network Fast Charging Points}

        $P^{1, in}_i \gets \ph (\mathcal{C}^{1, in}, s_i)$ \;
        Append $P^{1, in}_i$ to $P(s_i)$\;

        \tcc{In-Network Regular Charging Points}
        $P^{0, in}_i \gets \ph (\mathcal{C}^{0, in}, s_i)$\;
        Append $P^{0, in}_i$ to $P(s_i)$\;
        \tcc{\textbf{Partner network:} Less Preferred}
        \tcc{Partner Network Fast Charging Points}
        
        $P^{1, par}_i \gets \ph (\mathcal{C}^{1, par}, s_i)$\;
        Append $P^{1, par}_i$ to $P(s_i)$\;
        \tcc{Partner Network Regular Charging Points}
        $P^{0, par}_i \gets \ph (\mathcal{C}^{0, par}, s_i)$\;
        Append $P^{0, par}_i$ to $P(s_i)$\;
    }    
\end{algorithm}

Algorithm \ref{algo:pref_generator_EV} takes the set of $\mathcal{S}$ and $\mathcal{C}$ as input. It generates the preferences profiles $P(s_i), \forall s_i \in \mathcal{S}$ considering the set of feasible CPs $c^{\theta, \eta}_{j} \in \mathcal{C}$. 
 
For a subscribed vehicle $s_i$, the preferences over the CPs are ordered depending on ownership, i.e., whether in-network or partner network. Note that in-network CPs are given higher priority in this work. The rationale is to service maximum EVs in-network, thereby minimizing costs for the vendor, which is the primary goal of this paper.

Algorithm \ref{algo:pref_generator_EV} proceeds as follows. For every subscribed EV, the CPs are processed in the order: (1) In-network fast ($\mathcal{C}^{1, in}$), (2) In-network regular ($\mathcal{C}^{0, in}$), (3) Partner network fast ($\mathcal{C}^{1, par}$), and (4) Partner network regular ($\mathcal{C}^{0, par}$) (Lines 2-10 of Algorithm \ref{algo:pref_generator_EV}). 

For each category, individual CPs are ranked using Algorithm \ref{algo:PrefHelper}, wherein each CP $c_{j}^{\theta, \eta} \in \mathcal{C}^{\theta, \eta}$ is tested for its reachability given the current state of charge of $s_i$. If reachable, the procedure tests the available limit for fast or regular charging per the subscribed plan (Lines 1 - 5 of Algorithm \ref{algo:PrefHelper}).
The ordering within each sub-category is performed concerning the detour distances $\{d_{i, j}, d_{i, j'} \} \in \mathcal{D}_i$ for CPs $c_{j}^{\theta, \eta}$ and $c_{j'}^{\theta, \eta}$. Note that $\mathcal{D}_i$ is the vector containing detour distances of $s_i, \forall c_{j}^{\theta, \eta} \in \mathcal{C}^{\theta, \eta}$ within the RSU region. 

Therefore, for any two CPs in the same category, the preference relationship can be represented as:
\begin{equation*}
 c^{\theta, \eta}_{j} \succ_{s_i} c^{\theta, \eta}_{j'} \iff d_{i, j} < d_{i, j'}
\end{equation*}
\setlength{\textfloatsep}{0.1cm}
\setlength{\floatsep}{0.1cm}
\begin{algorithm}[htb]
%\setstretch{0.85}
    \caption{\textbf{\ph}}
    \label{algo:PrefHelper}
    \small
    \DontPrintSemicolon

    \KwIn{$\mathcal{C}^{\theta, \eta}$, $s_i$,  $\mathcal{D}_i$} 
 
    \KwOut{$P_i^{\theta, \eta}$}
     \textbf{Initialize}: $P_i^{\theta, \eta} \gets \emptyset$ \tcp*{Null Initialization}
    \For{$c^{\theta, \eta}_{j} \in \mathcal{C}^{\theta, \eta}$}{
        \If{$(\mathcal{B}_i^0 - \mathcal{B}_{i,j}) > 0$}{
            \If{$\theta = 0$ OR $f_i > \psi_{i,j}$}
            {
                $P_i^{\theta, \eta} \gets P_i^{\theta, \eta} \bigcup \{ c^{\theta, \eta}_{j} \} $
            } 
        }      
    }
    $P(s_i^{\theta, \eta}) \gets$ \textbf{Increasing\_Order\_Sort} ($P_i^{\theta, \eta}, \mathcal{D}_i)$ \\
    return $P_i^{\theta, \eta}$\;
\end{algorithm}

\subsubsection{Matching Algorithm}\label{sec:matching_algorithm}
The overall assignment procedure executes as per Algorithm \ref{algo:Matching}. It takes as input  $\mathcal{C}$, $\mathcal{S}$, and the preferences of agents $P(s_i)$, $\forall s_i \in \mathcal{S}$. The initialization encompasses (1) setting the initial matching of agents in $\mathcal{S} \cup \mathcal{C}$ to $\emptyset$, (2) setting the waiting ($\mathbb{W}^{\theta, \eta}_j$) and current proposer ($\mathbb{P}^{\theta, \eta}_j$) lists for each $c^{\theta, \eta}_{j} \in \mathcal{C}$ to $\emptyset$. The proposer and waiting lists are updated at every iteration of the matching. 
The former tracks the proposals received at a specific round, whereas the latter records the accepted proposals (the preferred coalition) till this round. 

Algorithm \ref{algo:Matching} proceeds in two phases. In the first phase, i.e., \textit{proposal phase}, every unassigned $s_i \in \mathcal{S}'$, with a non-empty preference, identifies its most preferred charging option (say $c^{\theta, \eta}_{j}$) using \textbf{Extract(.)}, and sends a proposal to the same (Algorithm \ref{algo:Matching} Lines 5-9). After sending the proposal, the same is evicted from the preference ($P(s_i)$) and is added to the current proposal list ($\mathbb{P}^{\theta, \eta}_j$) of $c^{\theta, \eta}_{j}$.
\begin{algorithm}[]
    \caption{\textbf{Matching}}
    \label{algo:Matching}
    \DontPrintSemicolon

    \small
    \KwIn{$\mathcal{C}, \mathcal{S}, P(s_i), \, \forall \, s_i \in \mathcal{S}$}
    \KwOut{$\mu: \mathcal{S} \cup \mathcal{C} \rightarrow 2^{\mathcal{S} \, \cup \, \mathcal{C}}$}
 
    \textbf{Initialize}: $\mu(s_i) \gets \emptyset$, $\forall \, S_{i} \in \mathcal{S}$, $\mu(c^{\theta, \eta}_{j}) \gets \emptyset$, $\forall \, c^{\theta, \eta}_{j} \in \mathcal{C}$ \newline
    Current Proposer List: $\mathbb{P}^{\theta, \eta}_j \gets \emptyset, \forall c^{\theta, \eta}_{j} \in \mathcal{C}$ \newline
    Waiting List: $\mathbb{W}^{\theta, \eta}_j \gets \emptyset, \forall c^{\theta, \eta}_{j} \in \mathcal{C}$\;
    \While{$True$}
    {
        $\mathcal{S}' \gets \{ s_i \,|\, \mu(s_i) = \emptyset \,\, \&\& \,\, P(s_i) \neq \emptyset$\}\;
        \If{$\mathcal{S}' \neq \emptyset$}
        {
            \For{each $s_i \in \mathcal{S}'$}
            {
                $c^{\theta, \eta}_{j} \gets$ \textbf{Extract} ($P(s_i)$)
                \tcp{Most Preferred Choice}
                Send Proposal to $c^{\theta, \eta}_{j}$\;
                Remove $c^{\theta, \eta}_{j}$ from $P(s_i)$\;
                $\mathbb{P}^{\theta, \eta}_j \gets \mathbb{P}^{\theta, \eta}_j \cup \{s_i\}$\;
            }
            \For{each $c^{\theta, \eta}_{j} \in \mathcal{C}$ such that $\mathbb{P}^{\theta, \eta}_j \neq \emptyset$}
            {
                $\mu(c^{\theta, \eta}_{j}) \gets \pc(\mathbb{W}^{\theta, \eta}_j, \mathbb{P}^{\theta, \eta}_j)$ \tcp{Select the preferred coalition from ($\mathbb{W}^{\theta, \eta}_j \cup \mathbb{P}^{\theta, \eta}_j$)}
                $\mathbb{W}^{\theta, \eta}_j \gets \mu(c^{\theta, \eta}_{j})$\;
                $\mathbb{P}^{\theta, \eta}_j \gets \emptyset$ 
                    \tcp{Other applicants are rejected}
            } \vspace{-0.04in}
        }
    \Else
    {
        \For{$c^{\theta, \eta}_{j} \in \mathcal{C}$}
        {
            $\mu(c^{\theta, \eta}_{j}) \gets 
            \mathbb{W}^{\theta, \eta}_j$\;
        }
        \For{$s_i \in \mathcal{S}$}
        {
            $\mu(s_i) \gets c^{\theta, \eta}_{j}$ 
            \text{such that} \,\, $s_i \in \mathbb{W}^{\theta, \eta}_j$\;
        }
    return $\mu$\;
    } \vspace{-0.04in}
}
\end{algorithm}

Once the EVs send their proposals, the \textit{accept/reject} phase of the algorithm is triggered. In this phase, the procedure first identifies the CPs that have received at least one proposal. For such $c^{\theta, \eta}_{j}$, the preferred coalition is computed using Algorithm \ref{algo:PreferredCoalition}, which computes the most feasible coalition from the set $\mathbb{L}^{\theta, \eta}_j$ using two different procedures (1) preferred coalition dynamic (PCD) and (2) preferred coalition greedy (PCG).
Note that $\mathbb{L}^{\theta, \eta}_j$ includes the subscribers from $\mathbb{W}^{\theta, \eta}_j$ and all $s_i \in \mathbb{P}^{\theta, \eta}_j$ such that $\delta_{i, j} > 0$. This pruning ensures that the non-SLA-complaint proposing subscribers are evicted and are not considered in the coalition formation (Lines 3-5 of Algorithm \ref{algo:PreferredCoalition}). Finally, the waiting list in the final iteration is the preferred coalition (Lines 15-18 Algorithm \ref{algo:Matching}).

\subsubsection{Working of PCD}
\begin{algorithm}[htb]
    \caption{\textbf{PreferredCoalition}}
    \label{algo:PreferredCoalition}
    \DontPrintSemicolon
    \small
    \KwIn{$\mathbb{W}^{\theta, \eta}_j, \mathbb{P}^{\theta, \eta}_j$}
    \KwOut{$\mu(c^{\theta, \eta}_{j})$}
    $\mu(c^{\theta, \eta}_{j}) \gets \emptyset$\;
    $\mathbb{L}^{\theta, \eta}_j \gets \mathbb{W}^{\theta, \eta}_j$\tcp{create a list of candidate EVs}
    \For{$s_i \in \mathbb{P}^{\theta, \eta}_j$}{
        \If{$\delta_{i, j}  > 0$}{\label{code:prunning}
            $\mathbb{L}^{\theta, \eta}_j \gets \mathbb{L}^{\theta, \eta}_j \cup \{s_j\}\,$\tcp{update the candidate list}
        }\vspace{-0.04in}
    }
    \tcc{Optimal preferred coalition search}
    $\mu(c^{\theta, \eta}_{j}) \gets \pcdp(\mathbb{L}^{\theta, \eta}_j)$\;
    \tcc{Greedy preferred coalition search}
    $\mu(c^{\theta, \eta}_{j}) \gets \pcgreedy(\mathbb{L}^{\theta, \eta}_j)$\;
    return $\mu(c^{\theta, \eta}_{j})$\;
\end{algorithm}
Algorithm~\ref{algo:PCD} first sorts the candidate EV list $\mathbb{L}^{\theta, \eta}_j$ by their \textit{minimum permissible charging rate}, defined as the ratio of each vehicle's charge requirement $\psi_{i, j}$ to its effective duration to SLA-compliant charging $\delta_{i,j}$, in non-increasing order, retained in $\hat{\mathbb{L}}^{\theta, \eta}_j$ (Line~\ref{code:PCD_Step1} of Algorithm~\ref{algo:PCD}). Such a consideration 
facilitates the selection of subscribers with higher charge requirements, increasing the total charge transferred and prioritizing subscribers' SLA.

From the sorted list $\hat{\mathbb{L}}^{\theta, \eta}_j$, the EVs that can be charged without breaching the SLA are identified. 

For memoization, two 2D matrices, $D_j$ and $A_j$, with dimensions $(|\mathbb{L}^{\theta, \eta}_j| + 1)$ rows and $(\delta_j^{max} + 1)$ columns, are considered. Here, the $k^{th}$ row represents the first $k$ EVs from $\hat{\mathbb{L}}^{\theta, \eta}_j$, and column $l$ represents the charging duration. 
Likewise, $D_j[k][l]$ denotes the maximum charge transferable in $l$ duration considering the first $k$ EVs, and $A_j[k][l]$ is the corresponding ordered assignment to achieve $D_j[k][l]$.
The $0^{th}$ row of $D_j$ is initialized to zero, and the $0^{th}$ row of $A_j$ is an empty list, reflecting that the maximum charge transfer is zero with no EVs available. Similarly, column $0$ indicates that with a duration of $0$, no EV can be charged (Line~\ref{code:zero row_start} -\ref{code:zero col_end} of Algorithm~\ref{algo:PCD}).
\begin{algorithm}[]
    \caption{\textbf{PCD}}
    \label{algo:PCD}
    \DontPrintSemicolon
    \small
  
    \KwIn{$\mathbb{L}^{\theta, \eta}_j$}
    \KwOut{$\mu(c^{\theta, \eta}_{j})$}
    
    $\mu(c^{\theta, \eta}_{j}) \gets \emptyset$\;
    $\hat{\mathbb{L}}^{\theta, \eta}_j \gets$ Sort $\mathbb{L}^{\theta, \eta}_j$ in descending order of $\psi_{i, j} / \delta_{i,j}$ \label{code:PCD_Step1}\;
    
    $\delta_j^{max}\gets \max(\delta_{i, j} : s_i \in \mathbb{L}^{\theta, \eta}_j)$\;
    $D_j \gets \left[0\right]_{|\mathbb{L}^{\theta, \eta}_j| + 1 \times \delta_j^{max}+ 1}$ \;
    Initialize empty 2-D matrix $A_j$ with $(|\mathbb{L}^{\theta, \eta}_j| + 1)$ rows and $(\delta_j^{max}+ 1)$ columns.\;

    \For{$l \gets 0$ to $\delta_j^{max}$}{ \label{code:zero row_start}
        $D_j[0][l] \gets 0$\;
        $A_j[0][l] \gets $ empty list\;
    }
    \For{$k \gets 0$ to $|\hat{\mathbb{L}}^{\theta, \eta}_j|$}{
        $D_j[k][0] \gets 0$\;
        $A_j[k][0] \gets $ empty list\label{code:zero col_end}\;
    }
    
    \tcc{Populate $D$}
    \For{$k \gets 1$ to $|\hat{\mathbb{L}}^{\theta, \eta}_j|$}{
        \For{$l \gets 1$ to $\delta_j^{max}$}{
            $i \gets \hat{\mathbb{L}}^{\theta, \eta}_j[k].getID()$\hfill \tcp{Get the subscriber ID}
            
            \If{$ \mathcal{T}_{i, j}  \leq l$ and $l \leq \delta_{i,j}$}{ \label{code: PCD_cond_include}
                \tcc{If $s_i$ is selected}
                \If{$\psi_{i, j} + D_j[k-1][l -  \mathcal{T}_{i, j}] \geq D_j[k-1][l]$ and $|s_i \cup A_j[k-1][l -  \mathcal{T}_{i, j} ]| \leq q^{\theta, \eta}_{j}$}{\label{code: PCD_cond_include2}
                    $D_j[k][l] \gets \psi_i^j + D_j[k-1][l - \mathcal{T}_{i, j} ]$\;
                    $A_j[k][l] \gets$ Append $s_i$ to $A_j[k-1][l -  \mathcal{T}_{i, j} ]$\;                         
                }
                \tcc{If $s_i$ is not selected}
                \Else{
                    $D_j[k][l] \gets D_j[k-1][l]$\;
                    $A_j[k][l] \gets A_j[k-1][l]$\;
                }
                
            }
            \Else{
                $D_j[k][l] \gets D_j[k-1][l]$\;
                $A_j[k][l] \gets A_j[k-1][l]$\;
            }
        }
    }
    \tcc{Select the maximum value in the last row of $D_j$}
    $l' \gets  \arg\max_{1 < l < \delta_j^{max}} (D_j[|\mathbb{L}^{\theta, \eta}_j|][l])$\;
    $\mu(c^{\theta, \eta}_{j}) \gets A_j[|\mathbb{L}^{\theta, \eta}_j|][l']$\;
    return $\mu(c^{\theta, \eta}_{j})$\;
\end{algorithm}
The row $k$ and column $l$ reflect a sub-problem involving the first $k$ EVs from $\hat{\mathbb{L}}^{\theta, \eta}_j$, given a maximum allowed charge duration of $l$. For the $k^{th}$ EV, identified as $s_i = \hat{\mathbb{L}}^{\theta, \eta}_j[k].getID()$, the procedure has the option to either include or exclude it from being assigned to $ c^{\theta, \eta}_{j}$. If excluded, the charge transferred in that case is captured in $D_j[k-1][l]$ as the problem simplifies to determining the maximum charge with the first $(k-1)$ EVs for a duration $l$.

On the other hand, an EV $s_i$ is included in the assignment \textit{iff} the following conditions ( Line~\ref{code: PCD_cond_include} ~\ref{code: PCD_cond_include2} of Algorithm~\ref{algo:PCD}) are satisfied: %\vspace{-0.36cm}
\begin{enumerate}
\item If $s_i$ can be charged within duration $l$ without violating the SLA (Line~\ref{code: PCD_cond_include} of Algorithm~\ref{algo:PCD});
\item The inclusion of $s_i$ does not exceed the capacity at CP $ c^{\theta, \eta}_{j}$;
\item The aggregate charge transfer including $s_i$ is greater than that of excluding $s_i$. Specifically, $\psi_{i, j} + D_j[k-1][l - \mathcal{T}_{i, j}] > D_j[k-1][l]$.
\end{enumerate} 

The total charge transfer, including $s_i$, considers the charge requirement of $s_i$ and the maximum charge that can be transferred in the remaining time after charging $s_i$. Whether $s_i$ is included or excluded, $A_j$ is updated accordingly. After populating matrices $D_j$ and $A_j$, the maximum SLA-compliant charge transfer is identified by finding the highest value in the last row of $D_j$. In case of multiple maximum values, the corresponding entries in $A_j$ are examined, and the one that charges the most EVs is selected to maximize both the total charge transfer and the number of EVs charged.
\subsubsection{Working of PCG}
Like PCD, the greedy coalition selection procedure is captured in Algorithm~\ref{algo:PCG}. 
The very first step in the procedure sorts the EV list $\mathbb{L}^{\theta, \eta}_j$ in non-increasing order considering their minimum permissible charging rates. It then sequentially processes $\mathbb{L}^{\theta, \eta}_j$ by sequentially adding it to the coalition subject to $c^{\theta, \eta}_{j}$, and SLA-adherence (Lines 5-11 of Algorithm \ref{algo:PCG}). Such a processing order prioritizes assignments of EVs seeking higher charges with relatively short waiting times, offering a locally optimal solution.
\begin{algorithm}[t]
    \caption{\textbf{PCG}}
    \label{algo:PCG}
    \DontPrintSemicolon
    %\scriptsize
    \KwIn{$\mathbb{L}^{\theta, \eta}_j$}
    \KwOut{$\mu(c^{\theta, \eta}_{j})$}
    Initialize empty assignment sequence $A_j$\;
    $\mu(c^{\theta, \eta}_{j}) \gets \emptyset$\;
    $\mathcal{T}^e \gets 0$\;
    Sort $\mathbb{L}^{\theta, \eta}_j$ in descending order of their minimum permissible charging rate $\frac{\psi_{i, j}}{\delta_{i, j}}$\;
    \For{$s_i \in \mathbb{L}^{\theta, \eta}_j$}{
        \If{$|\mu(c^{\theta, \eta}_{j})| \geq q^{\theta, \eta}_{j}$}{
            break\;
        }
        \If{$\mathcal{T}^e + \mathcal{T}_{i, j}\leq \delta_{i, j}$}{
            $\mu(c^{\theta, \eta}_{j}) \gets \mu(c^{\theta, \eta}_{j}) \cup \{s_i\}$\;
            Append $s_i$ to $A_j$\;
            $\mathcal{T}^e \gets \mathcal{T}^e + \mathcal{T}_{i, j}$\;
        } 
    }
    $\mu(c^{\theta, \eta}_{j}) \gets A_j$\;
    return $\mu(c^{\theta, \eta}_{j})$\;
\end{algorithm}
\begin{figure*}[!htbp]
    \centering
    \begin{minipage}[t]{0.32\textwidth}
    \centering
    \includegraphics[width=\textwidth]{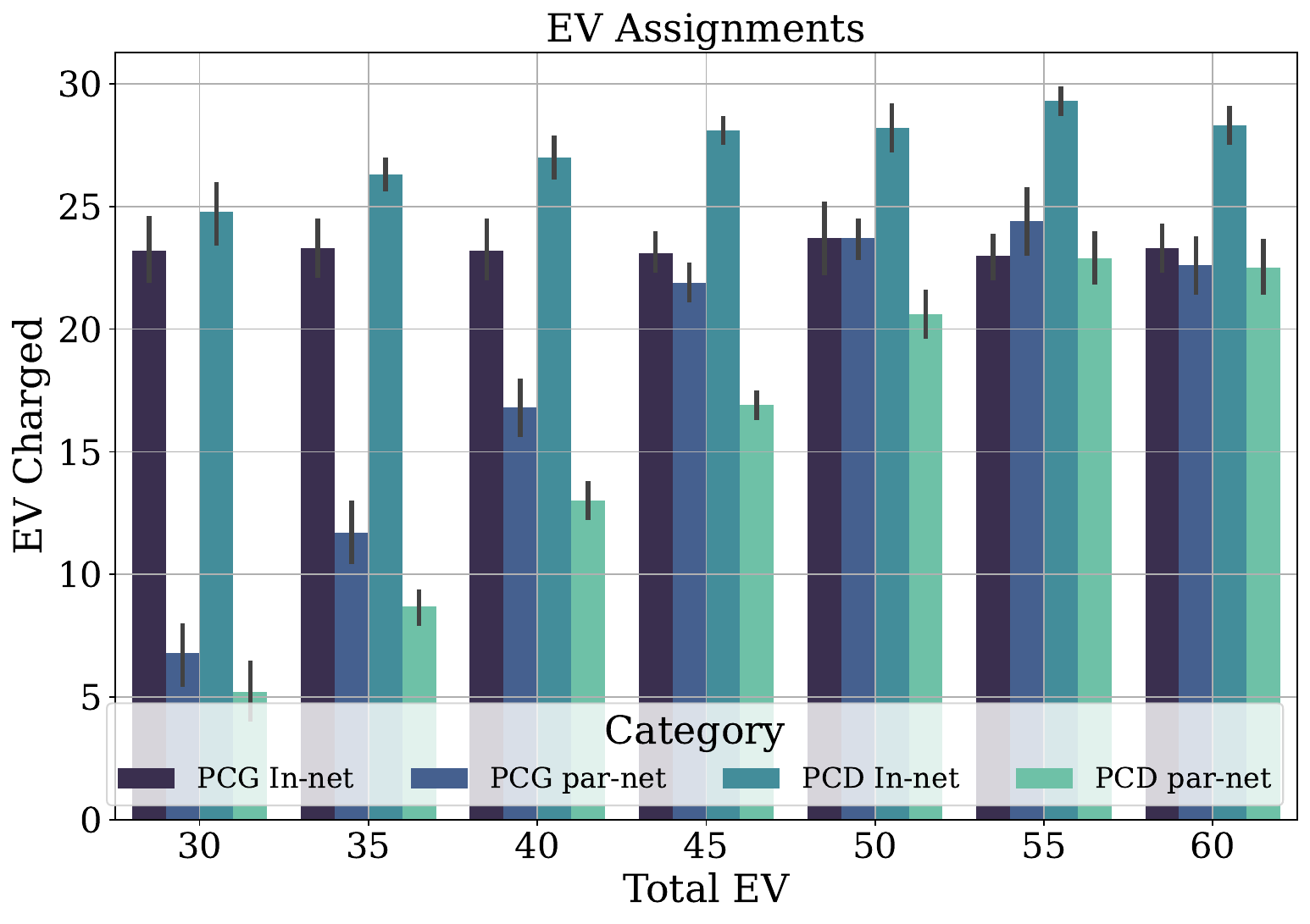}
    \vspace{-0.25in}
    \subcaption{EV Assignments. $q_j^{\theta, \eta} = 2$} 
    \label{fig:EV_Assignments}
    \end{minipage}
    \hfill
    \begin{minipage}[t]{0.32\textwidth}
    \centering
    \includegraphics[width=\textwidth]{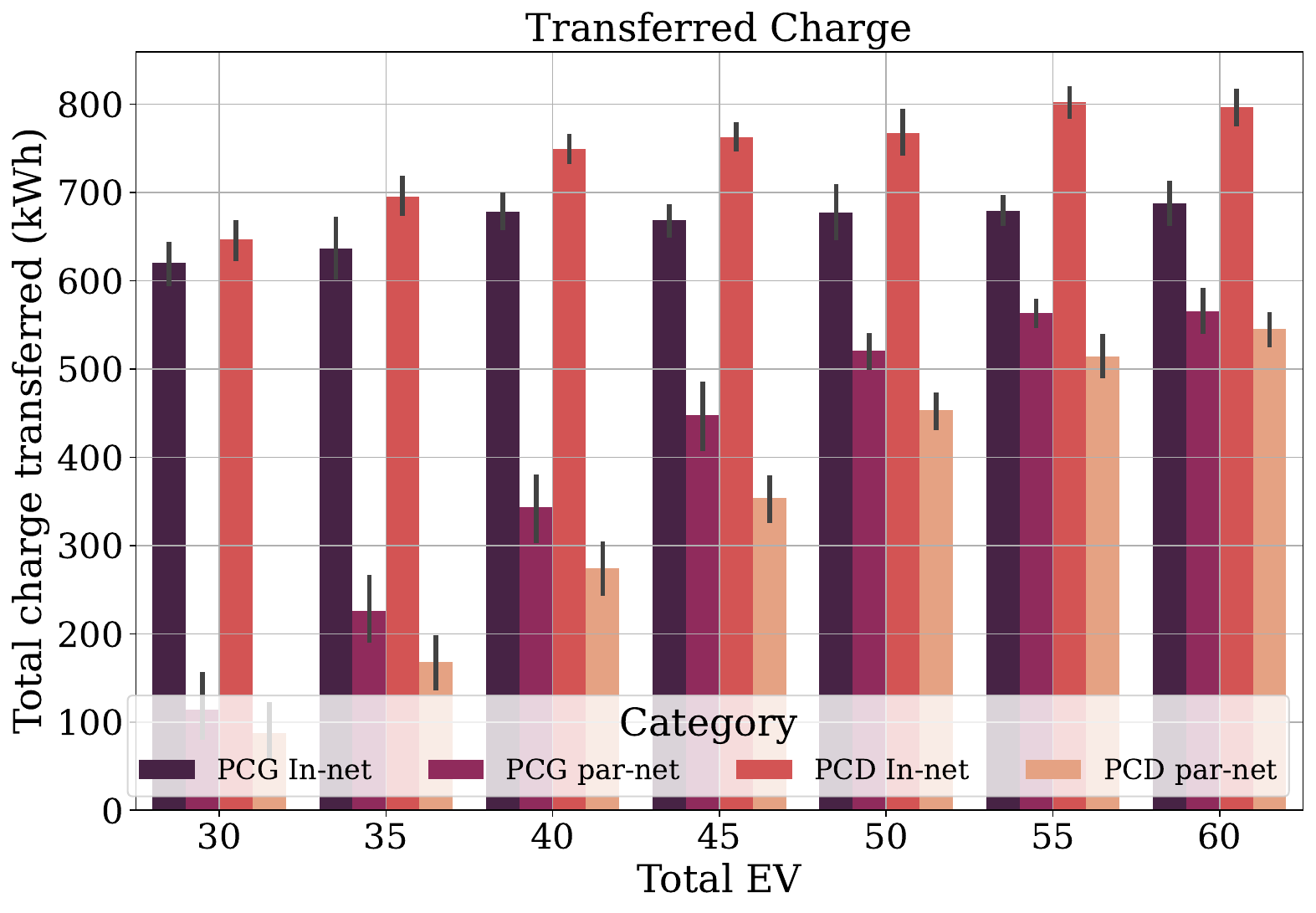}
    \vspace{-0.25in}
    \subcaption{Total Charge Transferred. $q_j^{\theta, \eta} = 2$} 
    \label{fig:Transferred_Charge}
    \end{minipage}
      \hfill
    \begin{minipage}[t]{0.348\textwidth}
    \centering
    \includegraphics[width=\textwidth]{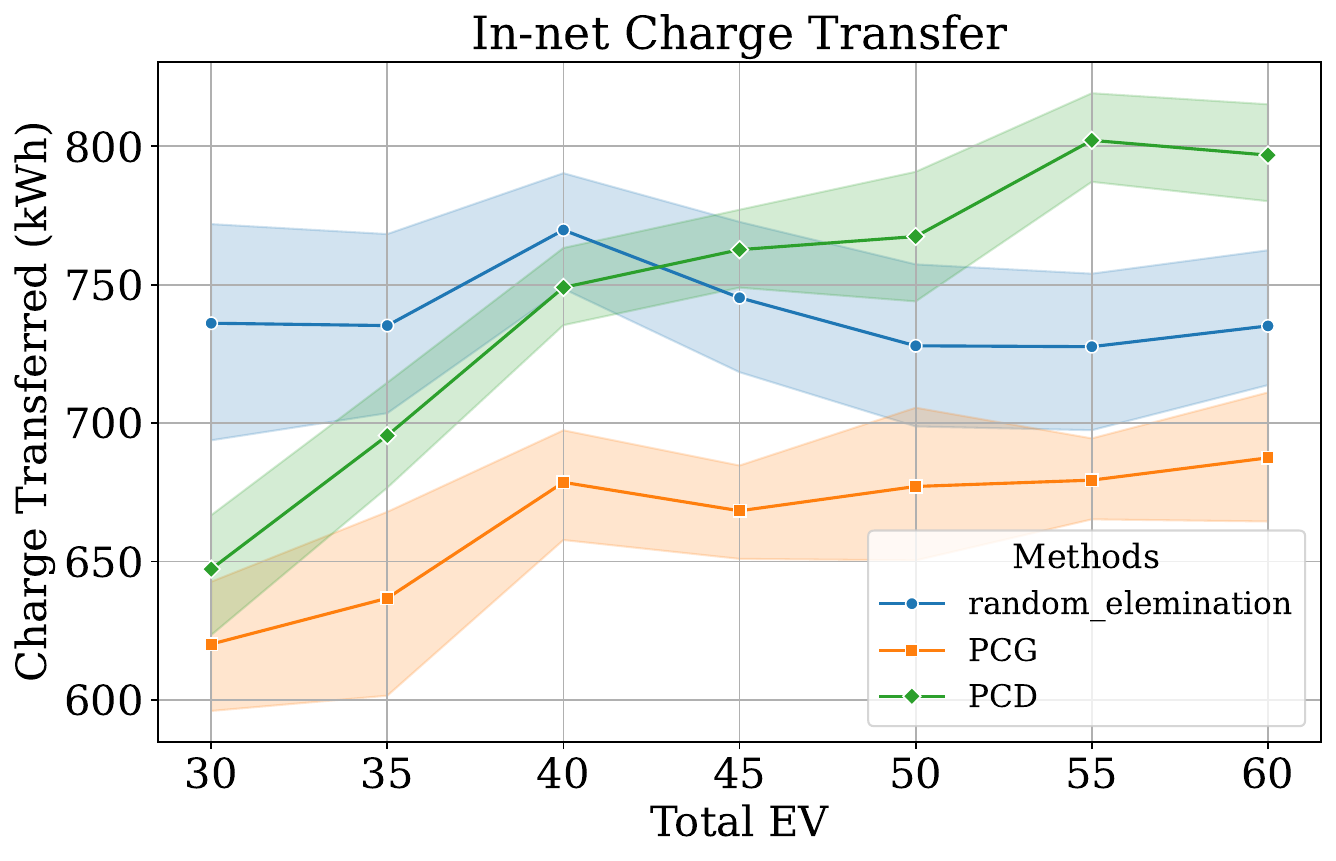}
    \vspace{-0.25in}
    \subcaption{In Network Charge Transferred. $q_j^{\theta, \eta} = 2$} 
    \label{fig:In-net_Charge_Transfer}
    \end{minipage}
    \caption{Performance based on Varying Number of EVs.}
\label{fig:performance_based_on_varying_EVs_part1}
\vspace{-0.2in}
\end{figure*}
\begin{figure}[!htb]
    \centering
\resizebox{\linewidth}{!}
    {
    \begin{minipage}[t]{0.65\linewidth}
    \centering
    \includegraphics[width=\columnwidth]{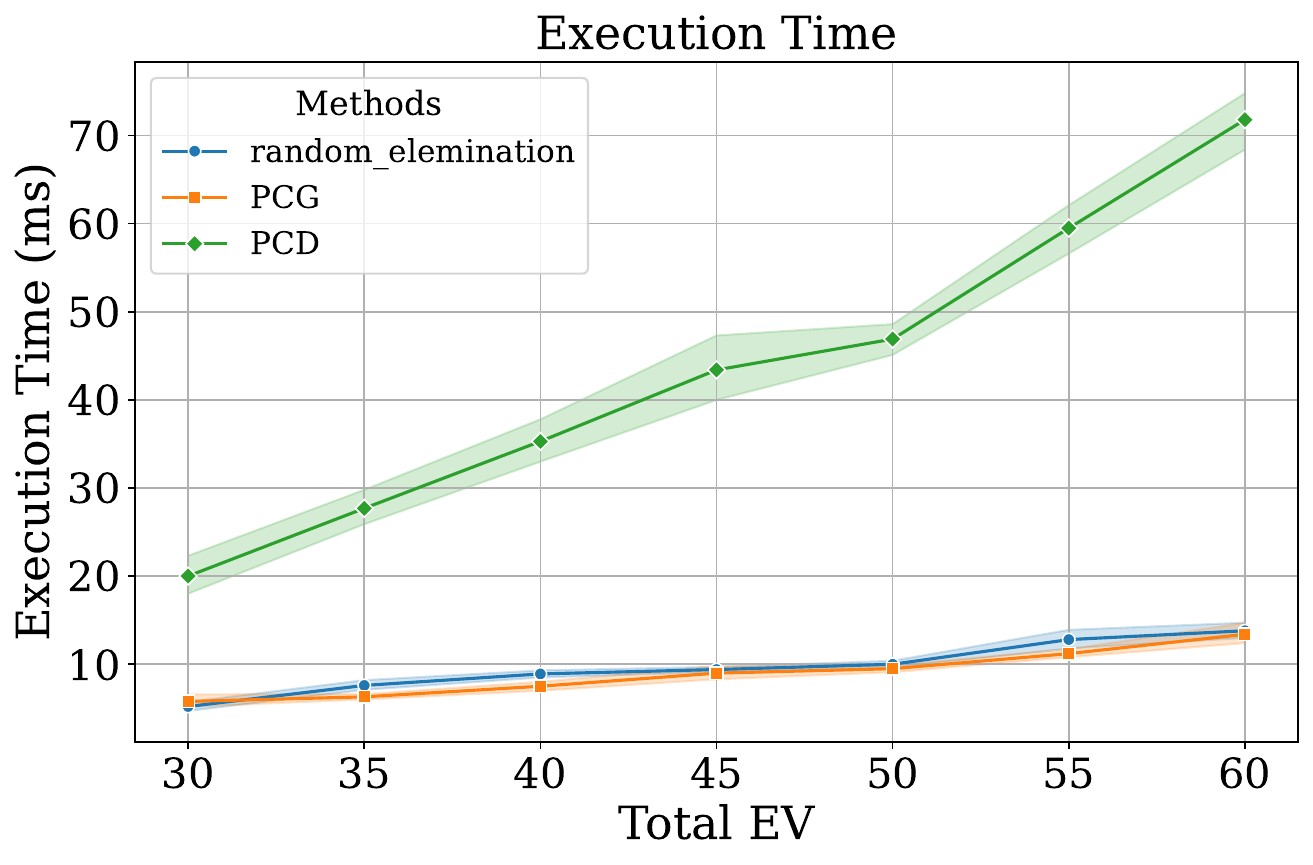}
    \subcaption{Execution Time (s).}
    \label{fig:execution_time}
    \end{minipage}
    \hfill
    \begin{minipage}[t]{0.65\linewidth}
    \centering
    \includegraphics[width=\textwidth]{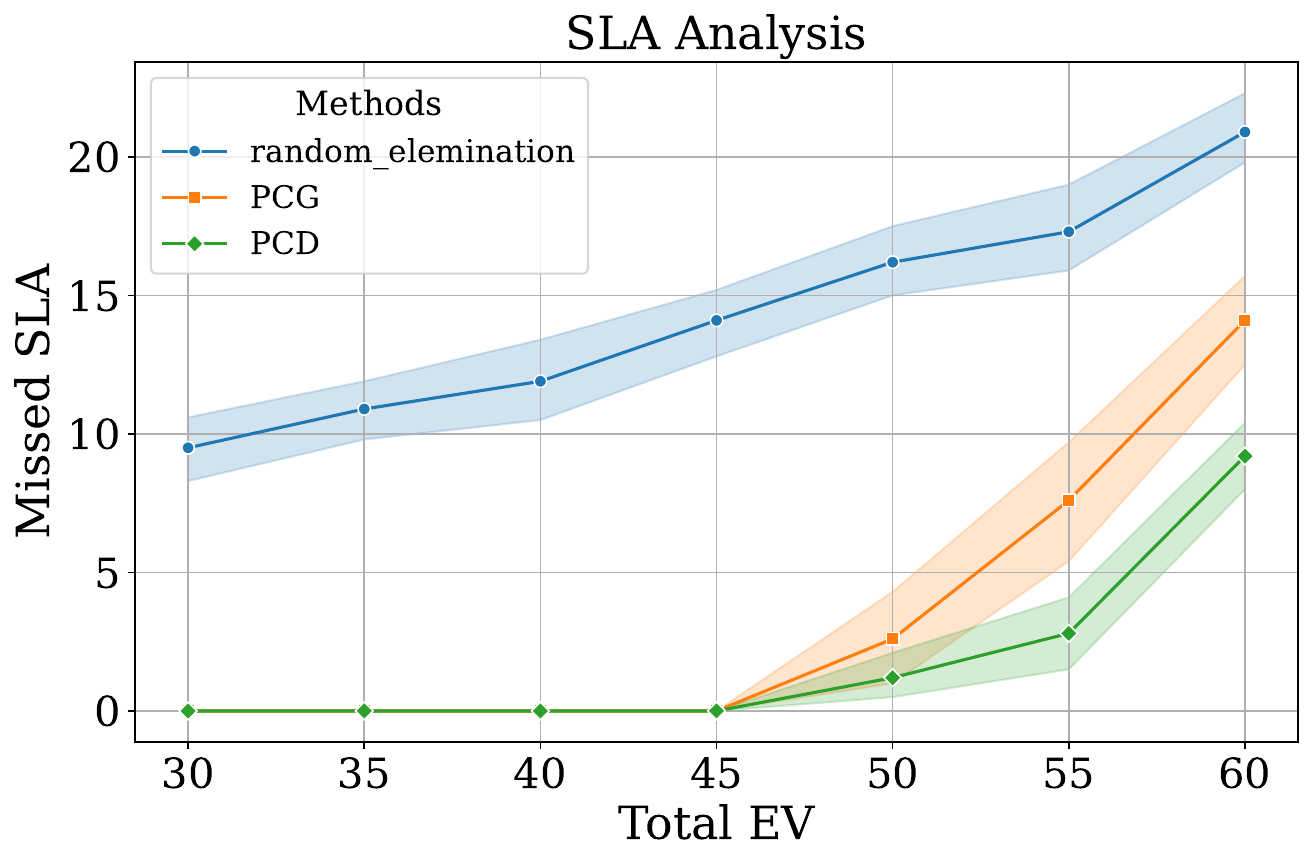}
    \subcaption{SLA Missed.} 
    \label{results:missed_SLA}
    \end{minipage}
    }
    \caption{Performance based on Varying Number of EVs.}
\label{fig:performance_based_on_varying_EVs_part2}
\end{figure}

\begin{theorem}
    The matching algorithm with \pcdp and \pcgreedy takes $\mathcal{O}(|\mathcal{C}|^2  |\mathcal{S}|  (  \log|\mathcal{C}| +  \delta_j^{max}))$ and $\mathcal{O}(|\mathcal{C}|^2  |\mathcal{S}|  \log|\mathcal{C}|)$ time respectively.
\end{theorem}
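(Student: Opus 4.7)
The plan is to decompose the runtime of Algorithm~\ref{algo:Matching} into (i) the number of iterations of the outer while loop and (ii) the per-iteration work, which is dominated by the calls to \pc at every CP that received at least one proposal. I would then substitute the per-call cost of \pcdp and \pcgreedy into this aggregate.

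First I would bound the iteration count at $O(|\mathcal{C}|)$. In every iteration each unmatched EV with a nonempty preference list sends exactly one proposal and removes that CP from $P(s_i)$, so the effective lifespan of any single EV's preference list is at most $|\mathcal{C}|$ iterations. Once every EV is either placed in some waiting list or has exhausted its preferences, $\mathcal{S}'$ becomes empty and the loop terminates, so the outer loop cannot execute more than $|\mathcal{C}|$ times. The proposal phase within a single iteration touches each unmatched EV once and costs $O(|\mathcal{S}|)$, and the accept/reject phase calls \pc on at most $|\mathcal{C}|$ CPs.

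Second I would analyze one invocation of \pc at a CP $c^{\theta,\eta}_j$. The candidate list $\mathbb{L}^{\theta,\eta}_j = \mathbb{W}^{\theta,\eta}_j \cup \{s_i \in \mathbb{P}^{\theta,\eta}_j : \delta_{i,j} > 0\}$ has size $O(|\mathcal{S}|)$ in the worst case. \pcgreedy sorts $\mathbb{L}^{\theta,\eta}_j$ by the ratio $\psi_{i,j}/\delta_{i,j}$ in $O(|\mathbb{L}^{\theta,\eta}_j|\log|\mathbb{L}^{\theta,\eta}_j|)$ time and then does a single linear scan that maintains a cumulative SLA clock, giving a per-call cost of $O(|\mathcal{S}|\log|\mathcal{S}|)$. \pcdp performs the same sort and then fills the $(|\mathbb{L}^{\theta,\eta}_j|+1)\times(\delta_j^{max}+1)$ memoization table with $O(1)$ work per cell, giving a per-call cost of $O(|\mathcal{S}|(\log|\mathcal{S}| + \delta_j^{max}))$. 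Since each candidate list is drawn from preference sequences of length at most $|\mathcal{C}|$, I would absorb $\log|\mathcal{S}|$ into $\log|\mathcal{C}|$ to match the form of the claim.

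Combining the two parts, one iteration of the outer while loop costs $O(|\mathcal{C}|\,|\mathcal{S}|\log|\mathcal{C}|)$ under \pcgreedy and $O(|\mathcal{C}|\,|\mathcal{S}|(\log|\mathcal{C}| + \delta_j^{max}))$ under \pcdp, and multiplying by the $O(|\mathcal{C}|)$ iteration bound yields the claimed complexities. The main obstacle I anticipate is the iteration-count step: one must argue carefully that an EV re-added to $\mathcal{S}'$ after being evicted from some waiting list still proposes from a strictly smaller $P(s_i)$, so that the total proposals charged to any single EV over the life of the algorithm remain bounded by $|\mathcal{C}|$ rather than blowing up with successive eviction-and-reproposal cycles.
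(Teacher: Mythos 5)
Your overall decomposition (outer-loop iterations times per-iteration work, with the calls to \pc dominating) is reasonable, but the step you yourself flag as the main obstacle is exactly where the argument breaks. The claim that the while loop of Algorithm~\ref{algo:Matching} executes at most $|\mathcal{C}|$ times does not hold. What is true is that each \emph{individual} EV is active (i.e., belongs to $\mathcal{S}'$) in at most $|\mathcal{C}|$ iterations, since it deletes one entry of $P(s_i)$ each time it proposes. But an EV accepted into a waiting list in some round can be evicted in a later round when \pc reforms the coalition at that CP upon receiving new proposals, after which it re-enters $\mathcal{S}'$; different EVs can therefore be active in essentially disjoint sets of rounds. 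In the worst case only one EV proposes per iteration, and the loop runs for as many iterations as there are proposals in total, i.e., up to $|\mathcal{S}|\cdot|\mathcal{C}|$ --- which is precisely the iteration bound the paper uses. Your closing remark that the total proposals charged to any single EV remain bounded by $|\mathcal{C}|$ bounds the \emph{total number of proposals} by $|\mathcal{S}|\cdot|\mathcal{C}|$, not the number of iterations by $|\mathcal{C}|$, so it does not rescue your count. With the corrected iteration count, your per-iteration charge of $\mathcal{O}(|\mathcal{C}|\,|\mathcal{S}|\log|\mathcal{C}|)$ would yield $\mathcal{O}(|\mathcal{C}|^2|\mathcal{S}|^2\log|\mathcal{C}|)$, an extra factor of $|\mathcal{S}|$. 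The way to land on the stated bound is the paper's amortized accounting: \pc is invoked only at CPs whose current proposal list $\mathbb{P}^{\theta,\eta}_j$ is nonempty, so the total number of \pc invocations over the entire run is at most the total number of proposals $|\mathcal{C}|\cdot|\mathcal{S}|$, and each invocation is charged its sort plus (for \pcdp) the table fill.

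A second, smaller issue is your absorption of $\log|\mathcal{S}|$ into $\log|\mathcal{C}|$. The candidate list $\mathbb{L}^{\theta,\eta}_j$ contains EVs, and in a single round every EV in $\mathcal{S}$ can propose to the same CP, so its size is $\mathcal{O}(|\mathcal{S}|)$; the fact that each EV's preference sequence has length at most $|\mathcal{C}|$ bounds how many CPs an EV ranks, not how many EVs land in one CP's list, so when $|\mathcal{S}| \gg |\mathcal{C}|$ the substitution changes the asymptotics. The paper's proof asserts $|\mathbb{P}^{\theta,\eta}_j| \leq |\mathcal{C}|$ at this point and charges each \pcdp call $\mathcal{O}(|\mathcal{C}|\log|\mathcal{C}| + |\mathcal{C}|\,\delta_j^{max})$ and each \pcgreedy call $\mathcal{O}(|\mathcal{C}|\log|\mathcal{C}|)$; that candidate-list bound is what makes the arithmetic come out to $\mathcal{O}(|\mathcal{C}|^2|\mathcal{S}|(\log|\mathcal{C}|+\delta_j^{max}))$ and $\mathcal{O}(|\mathcal{C}|^2|\mathcal{S}|\log|\mathcal{C}|)$. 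If you want the stated complexity, you must either justify such a bound on $|\mathbb{L}^{\theta,\eta}_j|$ or restate the result with $|\mathcal{S}|$ in place of that factor; your proposal does neither.
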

\vspace{-0.2in}
\begin{proof}
In the worst case, a subscriber may send proposals to all $|\mathcal{C}|$ CPs. With $|\mathcal{S}|$ subscribers in the system, the matching algorithm can be iterated up to $|\mathcal{C}| \cdot |\mathcal{S}|$ times. During each iteration, the preferred coalition algorithm is called for each CP. The preferred coalition algorithm mainly contains a \texttt{for} loop iterating over a proposer list of size $|\mathbb{P}^{\theta, \eta}_j|$ ($\leq |\mathcal{C}|$) and a function call to $\pcdp$ or $\pcgreedy$. The $\pcdp$ and $\pcgreedy$ algorithms both first sort the list of candidate EVs in $\mathcal{O}(|\mathcal{C}|  \log|\mathcal{C}|)$ time. The $\pcdp$ algorithm then fills 2D matrices $D_j$ and $A_j$ in $\mathcal{O}(|\mathcal{C}|  \delta_j^{max})$ time, while $\pcgreedy$ selects the first $q_j^{\theta, \eta}$ candidates that satisfy the SLA conditions in $\mathcal{O}(|\mathcal{C}|)$ time. Thus, the matching algorithm with $\pcdp$ takes $\mathcal{O}(|\mathcal{C}|^2  |\mathcal{S}|  (  \log|\mathcal{C}| + \delta_j^{max}))$ time, and with $\pcgreedy$, it takes $\mathcal{O}(|\mathcal{C}|^2  |\mathcal{S}| \log|\mathcal{C}|)$ time. 
\end{proof}
\vspace{-0.1in}
Generally, the maximum effective duration of SLA-compliant charging, $\delta_j^{max}$ exceeds the value $\log |\mathcal{C}|$.
This implies that matching EVs to CPs using the $\pcdp$ algorithm generally requires more time than when using the $\pcgreedy$ algorithm.
\vspace{-0.1in}
\section{Experimental Results}\label{sec:results}
This section highlights the environmental setup and discusses key insights from the experimental observations.
 \vspace{-0.1in}
\subsection{Environmental Setup}
The proposed methodology has been implemented in \texttt{Python 3.8.3}, with EVs and charging points modeled as separate classes and parameter specifications in Table \ref{tab:EnvParams}. We simulate an RSU region as a $16 \times 16$ Chicago-style 2-D grid where each block is a square region bordered by four edges, symbolizing road segments. Within this RSU region, the coordinates of EVs needing charging are uniformly generated at random.
Algorithms~\ref{algo:pref_generator_EV} and \ref{algo:PrefHelper} are implemented as methods within the EV class, while Algorithms~\ref{algo:Matching} through \ref{algo:PCG} are implemented as methods within the RSU. Our implementation is publicly available at~\cite{EV_CP_Assignment2024}.
\begin{table}[!htb]
\caption{Key Parameter Values}
\small
\label{tab:EnvParams}
\centering
\begin{tabular}{|l|l|}
\hline
\textbf{Parameter}               & \textbf{Value/Range}                               \\ \hline
City block length (Chicago grid) & $\frac{1}{8}$ mile                                 \\ \hline
RSU radius                       & $1.5$ mile ($\equiv 12$ block length) \\ \hline
Area under RSU                   & $16 \times 16$ blocks                              \\ \hline
$|\mathcal{C}^{1, in}|$          & 5                                                  \\ \hline
$|\mathcal{C}^{0, in}|$          & 10                                                 \\ \hline
$|\mathcal{C}^{1, par}|$         & 5                                                  \\ \hline
$|\mathcal{C}^{0, par}|$         & 10                                                 \\ \hline
$\hat{r}_i$                      & $2$ kWh \\ \hline
$v_i$                      & $30$ mph \\ \hline
$\alpha_i$                      & $0.8$ \\ \hline
$B_{full}$                      & $60$ kWh \\ \hline
$B_i^0$                      & $U[10,37]$ kWh \\ \hline
$c_i^{1, *}$                      & $U[0,60]$ kWh \\ \hline
$\gamma_i$                      & $U[5,25]$ min \\ \hline
$m_i$                      & $U[3,4]$ mpW \\ \hline
$r_j$ (regular Charging)                     & $1$ kWh per min\\ \hline
$r_j$ (fast Charging)                     & $2$ kWh per min\\ \hline
\end{tabular}%
\end{table}
\vspace{-0.1in}
\subsection{Results \& Analysis}
Figs. \ref{fig:performance_based_on_varying_EVs_part1} and \ref{fig:performance_based_on_varying_EVs_part2} capture the behavior of the different schemes for varying EVs. Fig. \ref{fig:EV_Assignments} captures the EVs successfully charged for fixed capacity at fast and regular CPs within the vendor's in-network and partner network. It shows that the in-network CPs become exhausted as the number of EVs increases, forcing them to charge at partner networks in both approaches. Moreover, the fast CPs are exhausted earlier due to pre-defined preference ordering (refer to Section \ref{sec:subscription_model}). In this case, PCD outperforms the PCG in assigning more EVs, as it eliminates the least preferred EVs using exhaustive search rather than locally optimal decisions made in the latter.

Fig. \ref{fig:Transferred_Charge} captures the total charge transferred as the number of EVs increases in the RSU region. It can be inferred from the figure that PCD satisfies the goal of achieving higher in-network charge transfer. The inferior performance of PCG is attributed to the heuristic-based selection criterion, wherein the EVs are processed in decreasing order of the minimum permissible charging rate. This result demonstrates that in-network CPs prioritize EVs with more significant charge requirements and lower SLAs. Thus, it favors increased charge-consuming and swift customer base for the vendor as per the goal of our work. Other sluggish EVs with meager charging needs are pushed to partner network to meet their SLAs. This can be affirmed in Fig. \ref{fig:Transferred_Charge}, as we observe more EVs assigned to partner network charging stations due to the bounded capacity of the in-network charging, leading to SLA violations. 

\begin{figure*}[htbp]
    \centering
    \begin{minipage}[t]{0.32\textwidth}
    \centering
    \includegraphics[width=\textwidth]{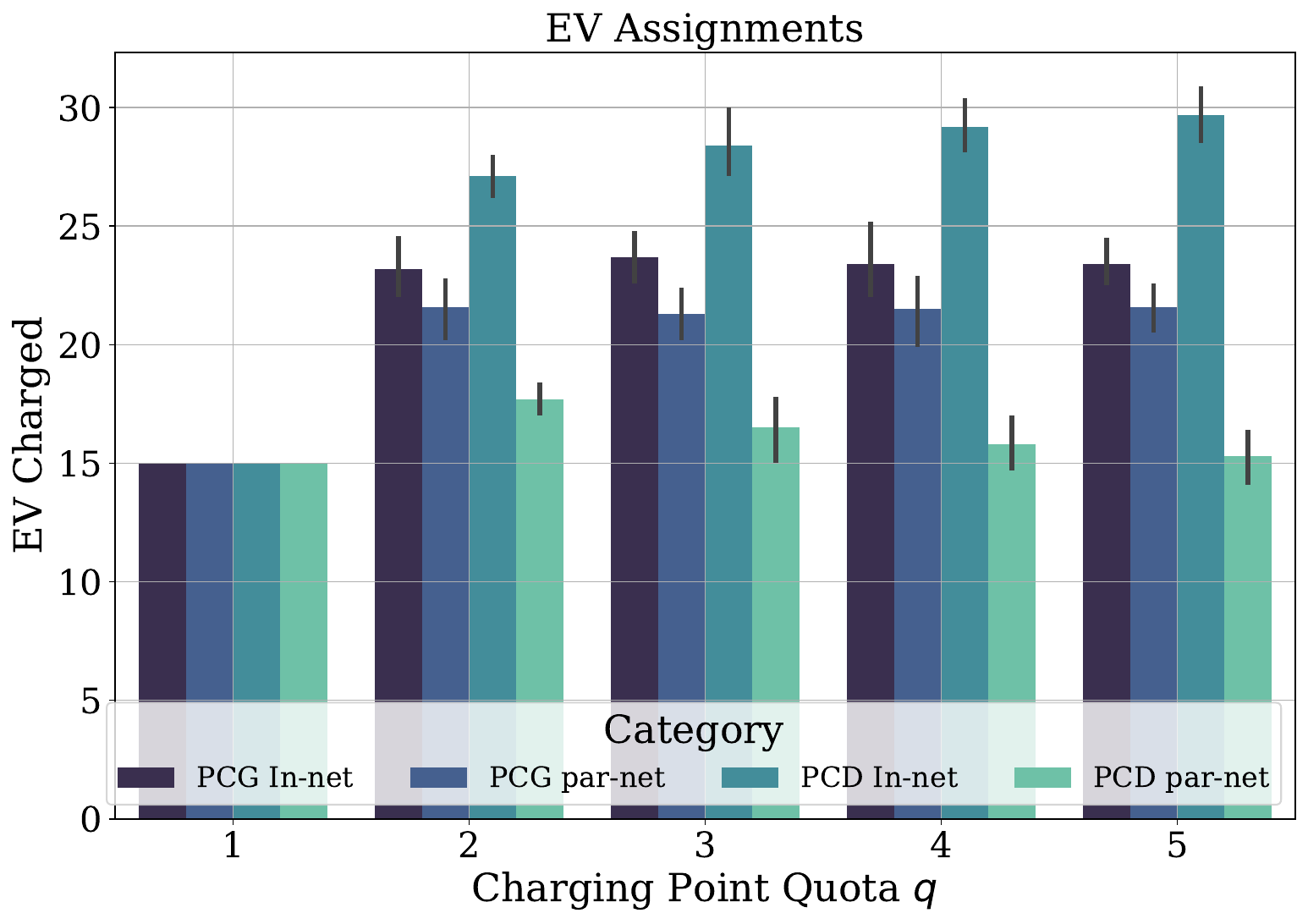}
    \vspace{-0.15in}
    \subcaption{EV Assignments. $|\mathcal{S}| = 45$} 
    \label{fig:EV_Assignments_varyingQ}
    \end{minipage}
    \hfill
    \begin{minipage}[t]{0.32\textwidth}
    \centering
    \includegraphics[width=\textwidth]{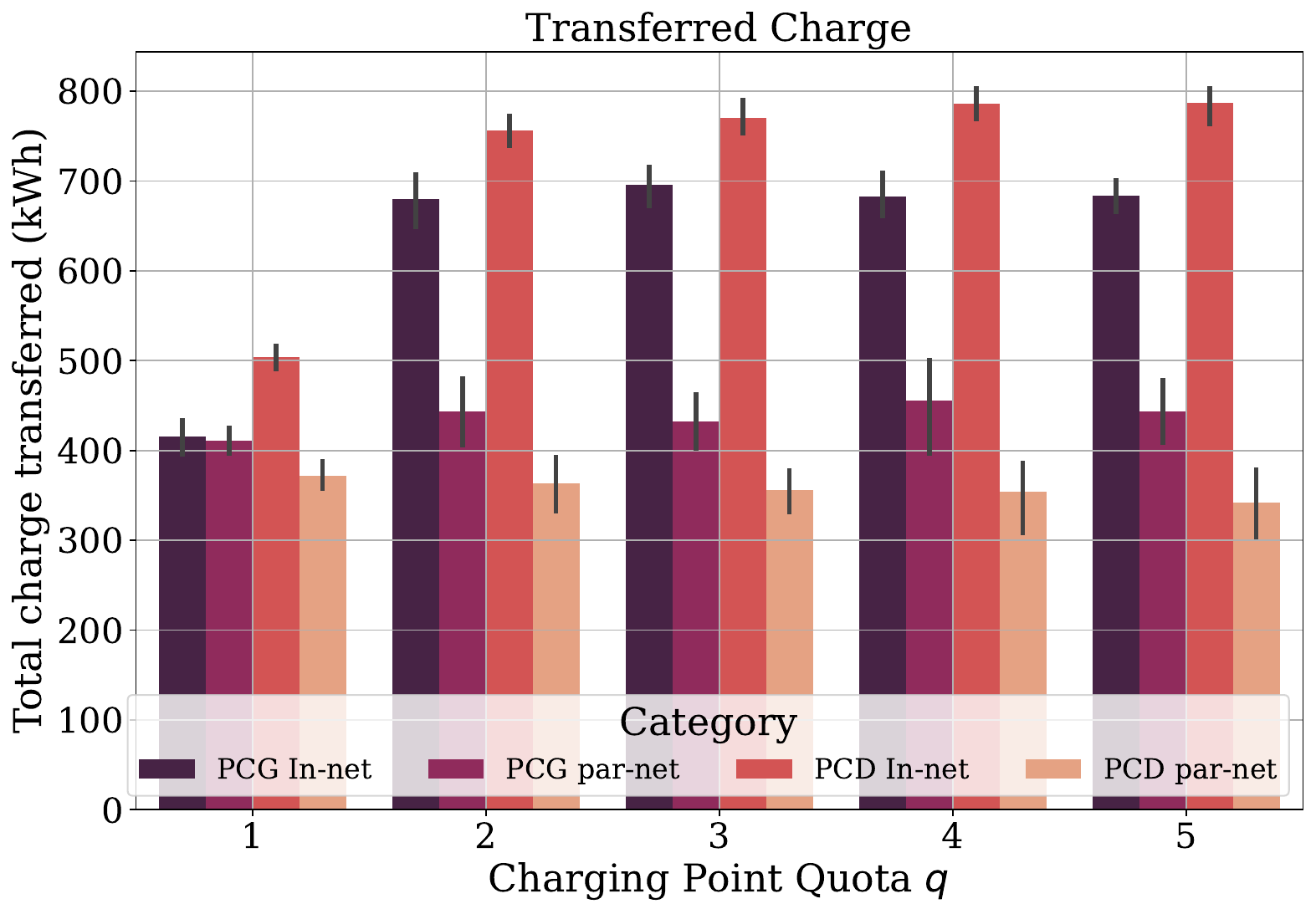}
    \vspace{-0.15in}
    \subcaption{Total Charge Transferred. $|\mathcal{S}| = 45$} 
    \label{fig:Transferred_Charge_varyingQ}
    \end{minipage}
      \hfill
    \begin{minipage}[t]{0.35\textwidth}
    \centering
    \includegraphics[width=\textwidth]{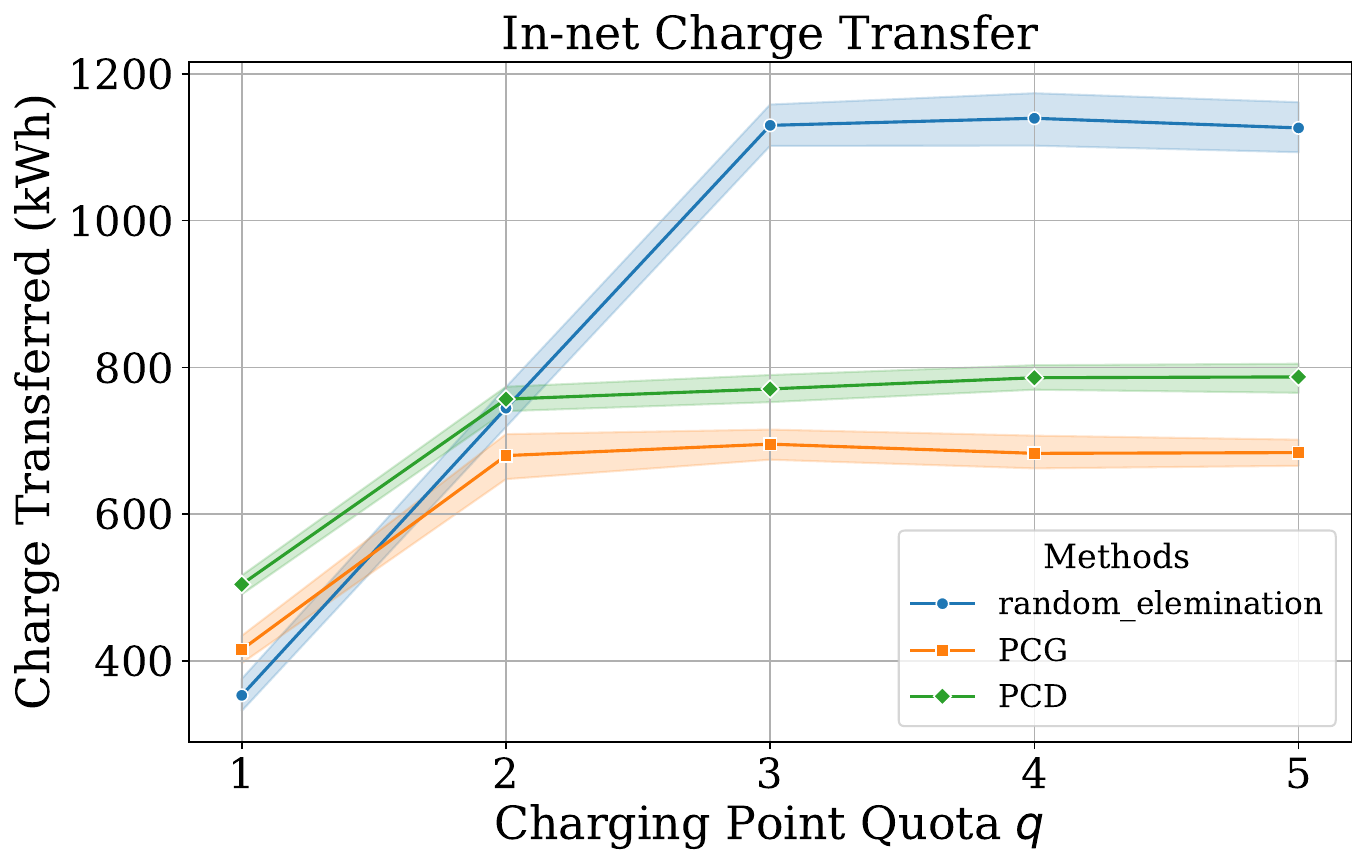}
    \vspace{-0.15in}
    \subcaption{In Network Charge Transferred. $|\mathcal{S}| = 45$} 
    \label{fig:In-net_Charge_Transfer_varyingQ}
    \end{minipage}
    \vspace{-0.05in}
    \caption{Performance based on Varying Queue Size.}
    \label{fig:Performance_varyingQ}
    \vspace{-0.15in}
\end{figure*}

Fig. \ref{fig:In-net_Charge_Transfer} captures the total in-network charge transferred for PCG, PCD, and a random elimination technique considering the CP capacity without SLA. As discussed earlier, the in-network charge transfer in PCD is higher than in PCG due to its exhaustive search-based eviction policy. An exciting pattern of more significant in-network charge transfer via $random\_elimination$ is observed for some test cases. This occurs as it fills the CPs to their maximum capacity, leading to the undesirable situation of a higher number of EVs missing their SLAs, as observed in Fig. \ref{results:missed_SLA}. As expected, the execution overhead in Fig. \ref{fig:execution_time} shows that PCD executes longer than $random$\_$elimination$ and PCG due to exhaustive searches. 

On the other hand, Figs. \ref{fig:Performance_varyingQ} and \ref{fig:queue_size_analysis} capture the performance of the different schemes for varying CP capacity, bounded by the total EVs within the RSU region as 45. Increased CP capacity accommodates more EVs as expected in Fig. \ref{fig:EV_Assignments_varyingQ}. 
As with the case of varying number of EVs in Figs. \ref{fig:EV_Assignments} and \ref{fig:Transferred_Charge}, varying the CP capacity also supports high in-network charge transfer with PCD and PCG along with more EVs being assigned, as depicted in Figs. \ref{fig:EV_Assignments} and \ref{fig:Transferred_Charge}. 
Note that when the queue size is $1$, all the in- and partner charging points operate at maximum capacity. As the queue size increases, the in-network charging options are utilized more due to enforced preference ordering. Alternatively, Fig. \ref{fig:In-net_Charge_Transfer_varyingQ} captures the comparative behavior of the in-network charge transferred considering PCG, PCD, and random$\_$elimination procedures. We observe an exciting behavior wherein the PCG and PCD have consistent in-network charges transferred irrespective of CP capacity ($\geq 2$). This is because, at this point, no more vehicles can be queued up at in-network charging points without missing their respective SLAs and, hence, are pushed to partner networks. The behavior and the reasoning for random$\_$elimination follow the discussion provided for Fig. \ref{fig:In-net_Charge_Transfer}. 
\begin{figure}[h]
    \centering
    \vspace{-0.12in}
\resizebox{\linewidth}{!}
    {
    
    \begin{minipage}[t]{0.65\linewidth}
    \centering
    \includegraphics[width=\columnwidth]{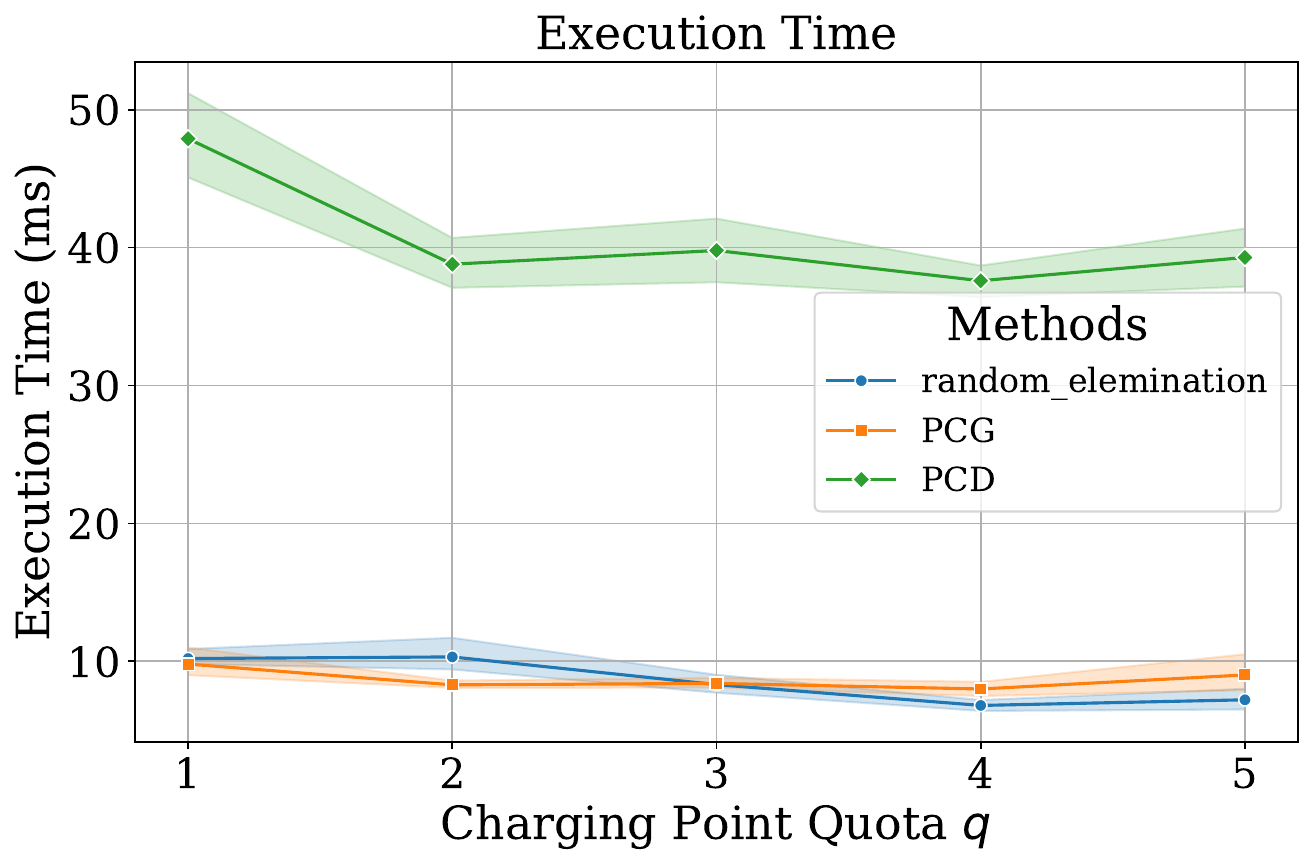}
    \subcaption{Execution Time (s).}
    \label{fig:execution_time_varyingQ_plot1}
    \end{minipage}
    \hspace{-0.05in}
    \begin{minipage}[t]{0.65\linewidth}
    \centering
    \includegraphics[width=\textwidth]{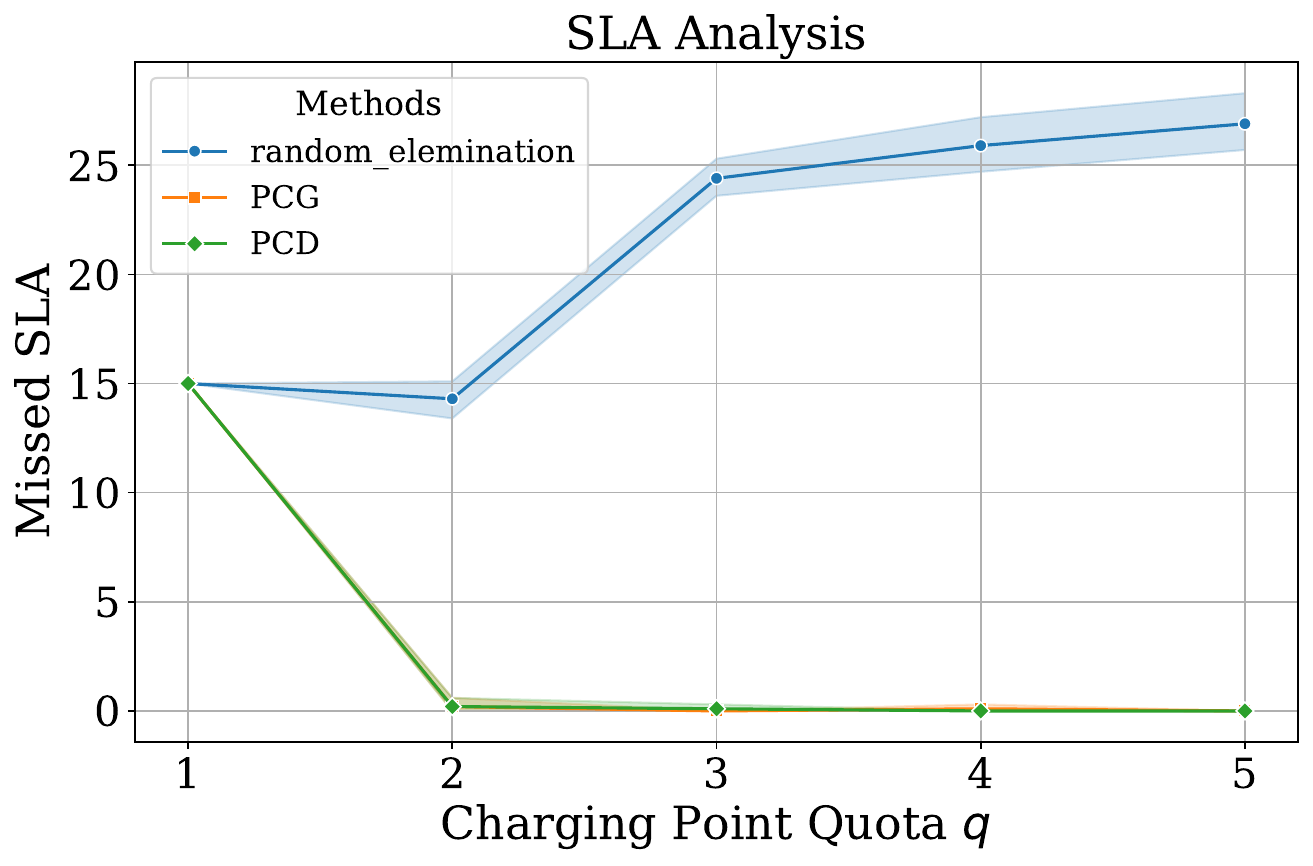}
    \subcaption{SLA Missed.} 
    \label{results:missed_SLA_varyingQ_plot2}
    \end{minipage}
    }
    \caption{Performance based on Varying Queue Size.}
    \label{fig:queue_size_analysis}
    \vspace{-0.15in}
\end{figure}
Figs. \ref{fig:execution_time_varyingQ_plot1} and \ref{results:missed_SLA_varyingQ_plot2} illustrate the execution times and SLA misses. The execution time of PCD is higher than that of PCG and random elimination due to the exhaustive PCD search procedure. However, we observe a decreasing trend with increasing queue size for all the approaches. This is because for a fixed number of vehicles, as the queue size increases, the number of times PCD and PCG are invoked is reduced owing to faster assignments, as the EVs have more options. On the other hand, the number of SLA misses is observed to be a sharp decline for PCD and PCG as the queue size increases to 2, as the EVs can be queued up at charging points. Moreover, both approaches experience 100\% SLA adherence ($\geq 2$). Finally, Figs \ref{fig:EV_distribution_n_ev45_q4} and \ref{results:Charge_Distribution_n_ev45_q4} respectively highlight the number of EVs assigned and the charge distributed across different charging points. The assignments follow the preference order, wherein the in-network charging points are filled first, followed by the partner network. The higher assignments in the figure indicate that the fast charging points are preferred within the in and partner network.
\begin{figure*}[h]
\centering
    \begin{minipage}[t]{0.49\linewidth}
    \centering
    \includegraphics[width=\columnwidth]{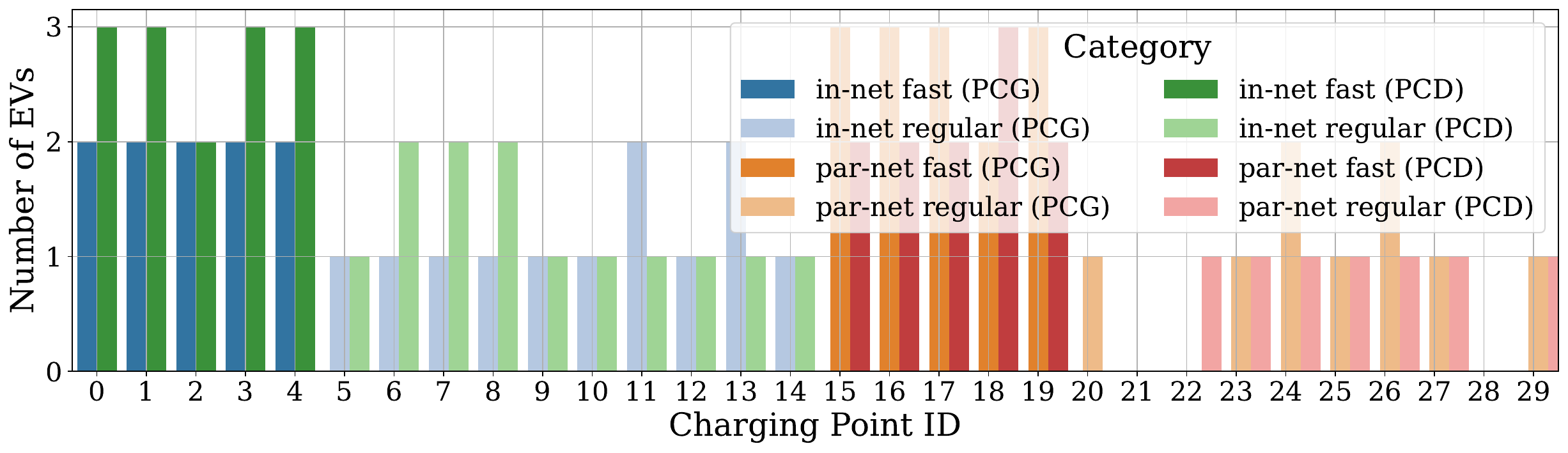}
    \vspace{-0.25in}
    \subcaption{EV distribution.} 
    \label{fig:EV_distribution_n_ev45_q4}
    \end{minipage}
    \hspace{-0.05in}
    \begin{minipage}[t]{0.49\linewidth}
    \centering
    \includegraphics[width=\columnwidth]{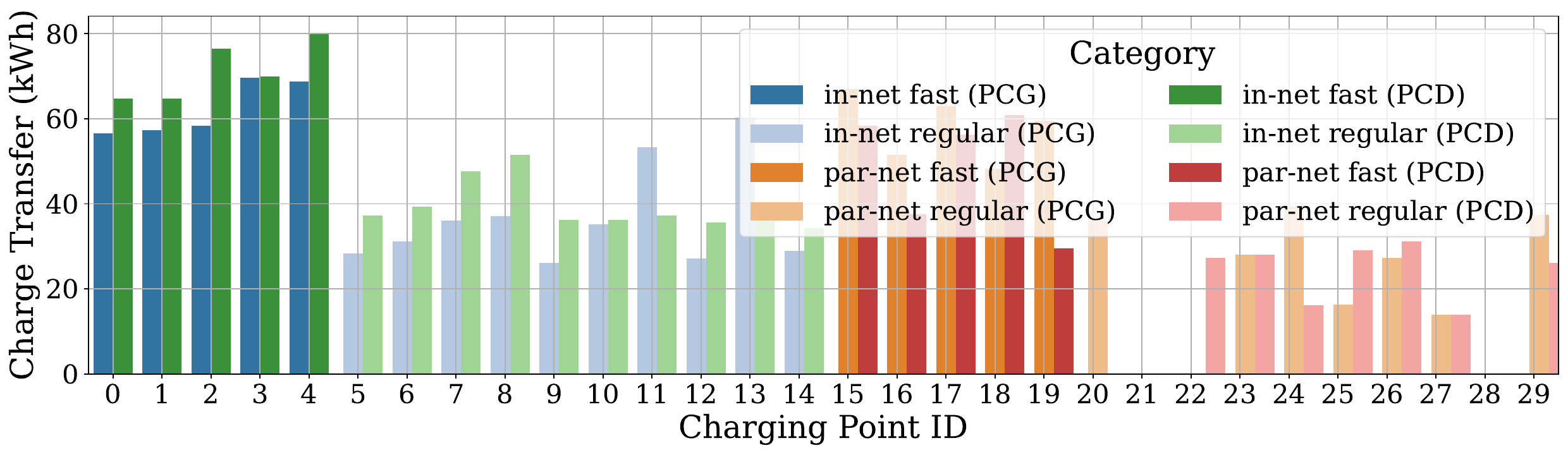}
    \vspace{-0.25 in}
    \subcaption{Charge Distribution.} 
    \label{results:Charge_Distribution_n_ev45_q4}
    \end{minipage}
    \vspace{-0.05in}
    \caption{Distribution across the charging points. $|\mathcal{S}| = 45$, $q_j^{\theta, \eta} = 4$.}
    \vspace{-0.2in}
\end{figure*}

\noindent \textbf{Results Summary:} It is observed that 64.5\% and 70.6\% of the average charge transfers are in-network with PCG and PCD for increasing EVs, respectively. Also, 58.7\% and 65\% of the average charge transfers are in-network with PCG and PCD for varying CP capacity, respectively. The average execution time of PCD is higher by 5.64$\times$ and 5.85$\times$, the average execution time of PCG and random elimination, respectively. For in-network charging, PCG and PCD give a corresponding gain of 14.6\% and 20.8\% over random elimination. PCG achieves 89\% of the optimal coalition produced by PCD. On average, random elimination has a massive 37.7\% of unserved EVs within RSU due to being unassigned to a CP or missed SLA, while PCG and PCD abandon 3\% and 0.1\% of EVs, respectively.
\vspace{-0.2in}
\section{Features and Limitations}\label{sec:discussion_and_limitations}
{\our} offers several key features. Firstly, it assumes that EVs are subscribed to a vendor plan with predefined SLAs that limit the maximum wait time before charging, with higher-tier plans providing shorter wait times. The plan currently sets a maximum quota for fast charging but does not restrict regular charging. Secondly, considering these SLAs, {\our} generates a stable, fast, and efficient vendor-centric EV-CP assignment, prioritizing the in-network assignment of EVs with fast charging quotas to fast charging stations first, followed by regular charging. Though the proposed scheme shows promising performance, it has the following limitations.

\begin{itemize}[leftmargin=*]
    \item \textbf{A Comprehensive Subscription Model}: 
    The current subscription model is straightforward and offers lower wait times (SLA) to EVs subscribed to higher-tier plans. However, determining optimal pricing for different tiers is complex and influenced by factors such as the location and type of CSs, where urban areas and faster charging options typically cost more. Factors such as time of use, energy source, subscription plans, local utility rates, and additional taxes or fees contribute to the overall cost. In {\our}, the fast charging limit is a fixed number. However, determining this quota is non-trivial and depends on the fast chargers available, electricity pricing, and other factors.
    \item \textbf{Stochastic Arrival and Departures of Vehicles}: Currently, {\our} operates on a static snapshot of the system, processing charging requests in batches and not accounting for dynamic arrivals or departures, which presents a more complex scenario.
    \item \textbf{Traffic Considerations en-route}: Once the RSU assigns a charging point, the time taken by the EV to reach the location can be significantly affected by traffic conditions, leading to chaotic situations if the EVs arrive either earlier or much later than expected.
    \item \textbf{Consideration of User-Preferred  Charging Time Slots}: 
    The current proposal does not consider the preferred time slots for EV charging.
    A more realistic approach would involve incorporating user preferences for specific charging time slots, which would add complexity to the scheduling problem.   
\end{itemize}
\vspace{-0.15in}
\section{Conclusions}\label{sec:cnls}
This work introduced the {\our} framework to create an EV-CP assignment plan accommodating fast and regular charging options. The challenge is inherently complex due to the rapid growth of EV adoption, the slow expansion of charging infrastructure, and the unpredictable nature of charging demands. The contributions of this work are two-fold: (1) an SLA-driven subscription model that limits the maximum waiting time for an EV at an assigned charging point and (2) a matching-theoretic solution that is stable, scalable, and effective in promptly assigning charge-seeking EVs to charging points. The matching procedure determines the preferred coalition using two strategies: (1) a greedy, heuristic-driven approach that is time-efficient and locally optimal, and (2) a dynamic programming approach that, while time-consuming, yields an optimal coalition. Comparative studies confirm that {\our} achieves performance improvements regarding in-network charge transferred, SLA compliance, and execution times.

\begin{acks}
This work was supported by the NSF grants: "Cyberinfrastructure for Accelerating Innovation in Network Dynamics" (CANDY) under award \# OAC-2104078, and  ``Satisfaction and
Risk-aware Dynamic Resource Orchestration in
Public Safety Systems" (SOTERIA) under award \# ECCS-2319995.
\end{acks}

\bibliographystyle{ACM-Reference-Format}
\bibliography{finalsub_main}

\end{document}